\definecolor{ffqqqq}{rgb}{1,0,0}
\newcommand{\N}{\mathbb{N}}
\newcommand{\Z}{\mathbb{Z}}
\newcommand{\R}{\mathbb{R}}
\newcommand{\false}{\operatorname{false}}
\newcommand{\qealg}{\mathcal{A}}
\newcommand{\sgn}{\operatorname{sgn}}
\newcommand{\sgnl}{\operatorname{sgnl}}
\newcommand{\sgnr}{\operatorname{sgnr}}
\newcommand{\sgnseq}[1]{\overline{#1}}
\newcommand{\msub}[2]{{#1}\mathop{/\kern-3pt/}{#2}}
\newcommand{\Zym}{\Z[y_1,\dots,y_m]}
\newcommand{\evama}{\langle\alpha_1,\dots,\alpha_m\rangle}
\newcommand{\evamb}{\langle\beta_1,\dots,\beta_m\rangle}
\newcommand{\guard}[2]{\Gamma({#1},{#2})}
\theoremstyle{definition}
\newtheorem{lemma}{Lemma}
\theoremstyle{plain}
\newtheorem{theorem}[lemma]{Theorem}
\newtheorem{proposition}[lemma]{Proposition}
\begin{document}
\title{A Generalized Framework for Virtual Substitution}

\author{Marek Ko{\v s}ta and Thomas Sturm\\
Max-Planck-Institut f{\" u}r Informatik\\
Saarbr{\" u}cken, Germany\\
\url{{mkosta,sturm}@mpi-inf.mpg.de}}

\date{January 22, 2015}

\maketitle

\begin{abstract}
  We generalize the framework of virtual substitution for real quantifier
  elimination to arbitrary but bounded degrees. We make explicit the
  representation of test points in elimination sets using roots of parametric
  univariate polynomials described by Thom codes. Our approach follows an early
  suggestion by Weispfenning, which has never been carried out explicitly.
  Inspired by virtual substitution for linear formulas, we show how to
  systematically construct elimination sets containing only test points
  representing lower bounds.
\end{abstract}

\section{Introduction}\label{SE:intro}
After Tarski's seminal paper~\cite{Tarski:48a} there has been considerable
research on quantifier elimination for real closed fields. One research line
lead to complexity results and asymptotically fast
algorithms~\cite{BasuPollack:96a,Grigoriev:1988a,Renegar:1992b,Weispfenning:88a}.
Unfortunately, these algorithms turned out not to be feasible in
practice~\cite{Hong:91a}.

From a practical point of view, the invention of quantifier elimination by
cylindrical algebraic decomposition (CAD) was an important
step~\cite{Collins:75}. Several
enhancements~\cite{CollinsHong:91,McCallum:1984a} of the original procedure
combined with an efficient implementation~\cite{Brown:03a} made it possible to
apply CAD-based algorithms to real-world problems to some
extent~\cite{HongEtAl:1997}. One principal drawback of all CAD-based algorithms
is the fact that parameters significantly contribute to the theoretical
complexity~\cite{Brown:2007}.

Virtual substitution is an alternative approach, particularly strong for
formulas with low degrees of the quantified variables, which is not sensitive to
the number of parameters. Similarly to CAD there exist efficient
implementations~\cite{DolzmannSturm:1997a}. The original description of the
method as well as first improvements and implementations were limited to linear
formulas~\cite{Weispfenning:88a,LoosWeispfenning:1993a}. The next important step
focused on formulas up to total degree two of the quantified
variables~\cite{Weispfenning:1997a}. That publication furthermore suggested in a
very abstract way to extend the procedure to arbitrarily large degree bounds,
and mentioned Thom's lemma~\cite{Roy:1998} as a possibility for distinguishing
real polynomial roots. In another publication Weispfenning made precise how to
perform virtual substitution up to degree three, without using Thom codes,
however~\cite{Weispfenning:1994a}.

Recently, virtual substitution has been playing some role in satisfiability
modulo theory solving~\cite{AbrahamLoup:10a,CorziliusAbraham:11a}. There appears
to be considerable interest in higher degrees~\cite{Liiva:2014}, focusing on the
software aspect rather than on theoretical foundations.

The present work picks up the original idea of Thom codes for generalizing
virtual substitution to arbitrary but fixed degree bounds, developing
theoretically precise foundations and a rigorous framework. Our original
contributions are the following:
\begin{enumerate}
\item We describe an encoding of parametric polynomial roots based on Thom's
  lemma. We prove that this encoding uniquely determines a root of a parametric
  univariate polynomial.
\item For a given encoding we specify formal necessary and sufficient conditions
  for the existence of a corresponding root. This allows to discard redundant
  elimination terms.
\item Our encoding allows to easily identify roots representing lower in
  contrast to upper bounds of relevant intervals.
\item We generally reduce the size of elimination sets by considering
  exclusively lower bounds of relevant intervals. This improves even the
  well-known elimination sets for the quadratic case.
\end{enumerate}

The plan of the paper is as follows: In Section~\ref{SE:roots} we describe our
root encoding based on Thom's lemma. In Section~\ref{SE:framework} we introduce
our framework by specifying what elimination terms look like and how they are
substituted. We prove the correctness of the framework. In
Section~\ref{SE:smaller} we reduce the size of elimination sets by considering
exclusively lower bounds of relevant intervals. In Section~\ref{SE:relation} we
discuss our framework in the context of alternative approaches in the
literature. We furthermore point at its compatibility with various
generalizations of quantifier elimination by virtual substitution. In
Section~\ref{SE:examples} we discuss practical issues related to possible
implementation strategies.

\section{Parametric Roots}\label{SE:roots}
Let $p\in\Zym[x]$ with $\deg p = n$, and let $\alpha_1$, \dots,~$\alpha_m\in\R$.
Denote by $\evama: \Zym\to\R$ the evaluation homomorphism in postfix notation,
i.e., we have $p\evama\in\R[x]$. Consider arbitrary $f\in\R[x]$. For $\xi\in\R$
we define the \emph{sign sequence} of length $k$ at $\xi$ as follows:
\begin{displaymath}
  \sgn_{\xi}(f, k) = (\sgn(f(\xi)), \sgn(f'(\xi)), \dots,
  \sgn(f^{(k-1)}(\xi))).
\end{displaymath}
A sign sequence $\sgnseq{s}=(s_1,\dots,s_n)\in\{-1,0,1\}^n$ is \emph{consistent}
with $p$ if there exist $\alpha_1$, \dots,~$\alpha_m$, $\xi\in\R$ such that the
following conditions hold:
\begin{enumerate}[label=($\textrm{C}_\arabic*$),ref=($\textrm{C}_\arabic*$)]
\item $\deg p\evama > 0$,\label{C1}
\item $p\evama(\xi) = 0$,\label{C2}
\item $\sgn_{\xi}(p\evama', n)=\sgnseq{s}$.\label{C3}
\end{enumerate}
It follows that $(0,\dots,0)$ is not consistent with any $q\in\Zym[x]$. The idea
is that $\sgnseq{s}$ uniquely describes a real root $\xi$ of $p\evama$.

A \emph{guard} $\guard{p}{\sgnseq{s}}$ for $p$ and
$\sgnseq{s}=(s_1,\dots,s_n)\in\{-1,0,1\}^n$ is a quantifier-free equivalent in
variables $y_1$, \dots,~$y_m$ of the Tarski formula
\begin{displaymath}
  \exists x\Bigl(p = 0 \land
  \bigwedge_{i=1}^{|\sgnseq{s}|} (p^{(i)}\mathrel{\sigma(s_i)}0)\Bigr),
\end{displaymath}
where
\begin{displaymath}
  \sigma(s_i) =
  \begin{cases}
    \text{``$<$''} &\text{if $s_i=-1$},\\
    \text{``$=$''} &\text{if $s_i=0$},\\
    \text{``$>$''} &\text{if $s_i=1$}.
  \end{cases}
\end{displaymath}

\begin{lemma}\label{LE:consistent2}
  Let $p\in\Z[y_1,\dots,y_m][x]$ with $\deg p = n > 0$. Let
  $\sgnseq{s}\neq(0,\dots,0)$ be a sign sequence of length $n$, and let
  $\alpha_1$, \dots,~$\alpha_m\in\R$. The following are equivalent:
  \begin{enumerate}[label=(\roman*)]
  \item $\guard{p}{\sgnseq{s}}$ holds for $\alpha_1$, \dots,~$\alpha_m$.
  \item There exists a unique $\xi\in\R$ such that conditions \ref{C1}--\ref{C3}
    hold.
  \end{enumerate}
\end{lemma}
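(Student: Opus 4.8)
The plan is to separate the essentially trivial logical content of the statement from its one substantial ingredient, the uniqueness of the root, and to handle uniqueness by Thom's lemma.

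First I would simply unfold the definition of the guard. Since $\guard{p}{\sgnseq{s}}$ is by definition a quantifier-free formula equivalent over the reals to $\exists x\bigl(p=0\land\bigwedge_{i=1}^{n}(p^{(i)}\mathrel{\sigma(s_i)}0)\bigr)$, statement~(i) holds for $\alpha_1,\dots,\alpha_m$ if and only if there is some $\xi\in\R$ with $p\evama(\xi)=0$ and $(p^{(i)})\evama(\xi)\mathrel{\sigma(s_i)}0$ for $i=1,\dots,n$. The evaluation homomorphism commutes with differentiation in $x$, so $(p^{(i)})\evama=(p\evama)^{(i)}=(p\evama')^{(i-1)}$; hence, by the definition of $\sigma$, this conjunction says precisely that $p\evama(\xi)=0$ and $\sgn_{\xi}(p\evama',n)=\sgnseq{s}$, that is, \ref{C2} and~\ref{C3}. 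Thus (i) is equivalent to the existence of some $\xi$ satisfying \ref{C2} and~\ref{C3}.

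Next I would observe that \ref{C2} and \ref{C3} for a single $\xi$ already force~\ref{C1}: if $\deg p\evama\le 0$, then $p\evama$ is either identically zero, in which case all of its derivatives vanish and \ref{C3} forces $\sgnseq{s}=(0,\dots,0)$, contrary to the hypothesis, or a nonzero constant, in which case $p\evama(\xi)\neq0$ contradicts~\ref{C2}. Combining this with the previous paragraph, (i) is equivalent to the existence of some $\xi$ satisfying \ref{C1}--\ref{C3}. It therefore only remains to prove that such a $\xi$, when it exists, is unique — this is the heart of the matter and the main obstacle.

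For uniqueness I would apply Thom's lemma~\cite{Roy:1998} to $g:=p\evama$ and its derivatives. By~\ref{C1}, $g$ is a nonzero polynomial, say of degree $d$ with $1\le d\le n$, and $g^{(d+1)}=0$, so the conditions $s_{d+1},\dots,s_n$ are automatically satisfied and carry no information. Any $\xi$ meeting \ref{C2} and~\ref{C3} realizes the sign condition that assigns $0$ to $g$ and $s_i$ to $g^{(i)}$ for $i=1,\dots,d$; by Thom's lemma the set of reals realizing this condition is connected, hence an interval, but it is contained in the finite zero set of $g$, so it is empty or a single point. As it is nonempty by the previous step, it is exactly one point, which gives uniqueness and hence the equivalence of (i) and~(ii). (If one prefers to avoid citing Thom's lemma, the same conclusion follows by a short induction on $d$: the set $\{x:\sgn(g^{(i)}(x))=s_i,\ i=1,\dots,d\}$ is an interval because, by the inductive hypothesis applied to $g'$, the analogous set for $g'',\dots,g^{(d)}$ is an interval on which $g'$ is monotone; on that interval $g$ is strictly monotone if $s_1\neq0$, and the interval is finite — hence a point — if $s_1=0$, so $g$ has at most one root there.) Everything apart from this uniqueness step is just unwinding definitions together with a one-line degree argument, so I expect the write-up to spend essentially all of its effort on the Thom's-lemma step.
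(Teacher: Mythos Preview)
Your proposal is correct and follows essentially the same route as the paper: unfold the definition of the guard to get existence of $\xi$ satisfying \ref{C2}--\ref{C3}, use $\sgnseq{s}\neq(0,\dots,0)$ to force \ref{C1}, and invoke Thom's little lemma for uniqueness, with the converse being immediate. The paper's write-up is terser (it simply says uniqueness ``follows directly from Thom's little lemma'' without your explicit connectedness argument or the optional induction sketch), but the structure and the key ingredient are identical.
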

\begin{proof}
  Assume (i). It is easy to see that there exists $\xi\in\R$ such that \ref{C2}
  and \ref{C3} hold. Denote $p\evama$ by $p_{\alpha}$. Then $s\neq(0,\dots,0)$
  implies that at least one of polynomials $p_{\alpha}'$,
  \dots,~$p_{\alpha}^{(n)}$, is not identically zero. Hence, $p_{\alpha}$ is of
  positive degree, i.e., \ref{C1} holds as well. The uniqueness of $\xi$ follows
  directly from Thom's little lemma~\cite{Roy:1998}.

  Assume (ii). Then it is easy to see that $\alpha_1$, \dots,~$\alpha_m$ satisfy
  $\guard{p}{\sgnseq{s}}$, even with a unique $x$.
\end{proof}

Lemma~\ref{LE:consistent2} ensures that $\sgnseq{s}\neq (0,\dots,0)$ is
consistent with $p$ if and only if $\guard{p}{\sgnseq{s}}$ is satisfiable. In
the positive case, $\guard{p}{\sgnseq{s}}$ gives a necessary and sufficient
condition for $p$ to have a root that is described by $\sgnseq{s}$.

As an example consider $p = ax^2 + bx + c$. Table~\ref{TAB:ex1} lists all sign
sequences consistent with $p$ along with their respective guards. The
polynomials $f$ in the table are obtained from $p$ by substituting for $a$, $b$,
and $c$ suitable values satisfying the corresponding condition
$\guard{p}{\sgnseq{s}}$ in the table. They all have $1$ as a root described by
the corresponding $\sgnseq{s}$, formally, $\sgn_{1}(f',2)=\sgnseq{s}$.
\begin{table}
  \renewcommand{\tabcolsep}{1em}
  \centering
  \begin{tabular}{lll}
    \hline
    $\sgnseq{s}$ & $\guard{p}{\sgnseq{s}}$ & $f$\\\hline
    $(-1,-1)$  & $a < 0 \land 4 a c - b^{2} < 0$ & $-x^2 + x$\\
    $(-1,0)$  & $a = 0 \land b < 0$ & $-x + 1$\\
    $(-1,1)$  & $a > 0 \land 4 a c - b^{2} < 0$ & $x^2 - 3x + 2$\\
    $(0,-1)$  & $a < 0 \land 4 a c - b^{2} = 0$ & $-x^2 + 2x - 1$\\
    $(0,1)$  & $a > 0 \land 4 a c - b^{2} = 0$ & $x^2 - 2x + 1$\\
    $(1,-1)$  & $a < 0 \land 4 a c - b^{2} < 0$ & $-x^2 + 3x + 2$\\
    $(1,0)$  & $a = 0 \land b > 0$ & $x - 1$\\
    $(1,1)$  & $a > 0 \land 4 a c - b^{2} < 0$ & $x^2 - x$\\
    \hline
  \end{tabular}
  \caption{Consistent sign sequences and guards}
  \label{TAB:ex1}
\end{table}

Let $\sgnseq{s}$ be consistent with $p$. We call the pair $(p,\sgnseq{s})$ a
\emph{parametric root} of $p$. Recall from the definition of consistency of
$\sgnseq{s}$ with $p$ that there are particular $\alpha_1$,
\dots,~$\alpha_m\in\R$ such that $\sgnseq{s}$ uniquely describes a real root of
$p\evama$. Note that these $\alpha_1$, \dots,~$\alpha_m\in\R$ need not be
unique. The possible choices for $\alpha_1$, \dots,~$\alpha_m$ are in fact
described by $\guard{p}{\sgnseq{s}}$.

The following lemma considers two such possible choices, and states that the
signs of the obtained univariate polynomials in the neighborhoods of the
corresponding roots are invariant with respect to the two choices.

\begin{lemma}\label{LE:sgnclose}
  Let $p\in\Zym[x]$ with $\deg p = n > 0$. Let $\sgnseq{s}=(s_1,\dots,s_n)$ be
  consistent with $p$, where $\alpha_1$, \dots,~$\alpha_m$, $\xi\in\R$, and
  $\beta_1$, \dots,~$\beta_m$, $\zeta\in\R$ are two possible sets of choices in
  \ref{C1}--\ref{C3}. Denote $p\evama$ by $p_{\alpha}$, and $p\evamb$ by
  $p_{\beta}$. Let $\varepsilon$ be a positive infinitesimal number. Then the
  following hold:
  \begin{enumerate}[label=(\roman*)]
  \item $\sgn(p_{\alpha}(\xi-\varepsilon)) =
    \sgn(p_{\beta}(\zeta-\varepsilon))$
  \item $\sgn(p_{\alpha}(\xi+\varepsilon)) =
    \sgn(p_{\beta}(\zeta+\varepsilon))$.
  \end{enumerate}
\end{lemma}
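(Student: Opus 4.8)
The plan is to compute both sides of (i) and (ii) explicitly in terms of the shared sign sequence $\sgnseq{s}$, so that the claimed invariance becomes manifest. First I would unwind consistency: since $\alpha_1,\dots,\alpha_m,\xi$ witness \ref{C1}--\ref{C3}, condition \ref{C2} gives $p_{\alpha}(\xi)=0$, and, reading off the definition of $\sgn_\xi$, condition \ref{C3} gives $\sgn\bigl(p_{\alpha}^{(i)}(\xi)\bigr)=s_i$ for $i=1,\dots,n$. In the same way, $p_{\beta}(\zeta)=0$ and $\sgn\bigl(p_{\beta}^{(i)}(\zeta)\bigr)=s_i$ for $i=1,\dots,n$.

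Let $j$ be the least index with $s_j\neq 0$; it exists because $\sgnseq{s}\neq(0,\dots,0)$, and $1\le j\le n$. Then $p_{\alpha}^{(i)}(\xi)=0$ for $0\le i<j$, while $p_{\alpha}^{(j)}(\xi)$ has the nonzero sign $s_j$; in particular $p_{\alpha}^{(j)}$ is not the zero polynomial, so $\deg p_{\alpha}\ge j$, in accordance with \ref{C1}. The exact Taylor expansion of $p_{\alpha}$ at $\xi$ is therefore
\[
  p_{\alpha}(\xi+t)=\sum_{k=j}^{\deg p_{\alpha}}\frac{p_{\alpha}^{(k)}(\xi)}{k!}\,t^{k}
  = t^{j}\Bigl(\tfrac{p_{\alpha}^{(j)}(\xi)}{j!}+t\cdot q(t)\Bigr)
\]
for some $q\in\R[t]$, a polynomial in $t$ with real coefficients whose lowest-degree term has sign $s_j$. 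Substituting $t=\varepsilon$ and $t=-\varepsilon$, and using the standard fact that the sign of a nonzero element $\sum_{k\ge j}c_k\varepsilon^{k}$ of $\R\langle\varepsilon\rangle$ with $c_j\neq 0$ equals $\sgn(c_j)$, I obtain $\sgn(p_{\alpha}(\xi+\varepsilon))=s_j$ and $\sgn(p_{\alpha}(\xi-\varepsilon))=(-1)^{j}s_j$. Running the identical computation for $p_{\beta}$ at $\zeta$ with the \emph{same} index $j$ — which is legitimate precisely because $j$ depends only on $\sgnseq{s}$ — yields $\sgn(p_{\beta}(\zeta+\varepsilon))=s_j$ and $\sgn(p_{\beta}(\zeta-\varepsilon))=(-1)^{j}s_j$. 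Comparing the two establishes (i) and (ii).

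There is no real obstacle here; the argument is elementary once the right normal form is in place. The only points requiring a little care are the index bookkeeping between the definition of $\sgn_\xi$, which begins with the first derivative of $p_{\alpha}$, and the positions in $\sgnseq{s}$, together with the justification that, over an ordered field extension containing a positive infinitesimal $\varepsilon$, the sign of $p_{\alpha}(\xi\pm\varepsilon)$ is governed by the lowest-degree monomial of its Taylor polynomial. It is worth noting that, in contrast to the uniqueness assertion in Lemma~\ref{LE:consistent2}, this lemma needs only Taylor's formula and not Thom's lemma itself.
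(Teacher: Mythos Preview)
Your argument is correct and takes a genuinely different route from the paper. The paper proves, by downward induction on $i$ from $n$ to $0$, the stronger statement that $\sgn\bigl(p_{\alpha}^{(i)}(\xi-\varepsilon)\bigr)=\sgn\bigl(p_{\beta}^{(i)}(\zeta-\varepsilon)\bigr)$ for every $i$, with a case split at each step on whether $s_k=0$ and on the sign of $p_{\alpha}^{(k)}(\xi-\varepsilon)$; the case $i=0$ then gives~(i). You instead expand $p_{\alpha}$ in a finite Taylor series at $\xi$, observe that the lowest surviving term has order $j=\min\{i:s_i\neq 0\}$, and read off the explicit values $\sgn(p_{\alpha}(\xi+\varepsilon))=s_j$ and $\sgn(p_{\alpha}(\xi-\varepsilon))=(-1)^{j}s_j$; since $j$ depends only on $\sgnseq{s}$, the invariance is immediate. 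Your approach is shorter, avoids the case analysis, and actually produces a closed-form for $\sgnl(p,\sgnseq{s})$ and $\sgnr(p,\sgnseq{s})$ in terms of $\sgnseq{s}$ alone, which is a pleasant byproduct. The paper's induction, on the other hand, yields the auxiliary fact that \emph{all} derivative signs at $\xi\pm\varepsilon$ are determined by $\sgnseq{s}$, though this extra information is not used later. One small point you leave implicit: $\sgnseq{s}\neq(0,\dots,0)$ is not stated in the lemma but follows from the remark after the definition of consistency, so the index $j$ is indeed well defined.
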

\begin{proof}
  We show only (i), the proof of (ii) is similar. To start with, note that
  \ref{C1}--\ref{C3} ensure that
  \begin{equation}\label{EQ:hugo}
    \sgn_{\xi}(p_{\alpha}', n) = \sgn_{\zeta}(p_{\beta}', n) = \sgnseq{s}.
  \end{equation}
  We show by induction that for $i\in\{n,\dots,0\}$ the following holds:
  \begin{displaymath}
    \sgn(p_{\alpha}^{(i)}(\xi-\varepsilon)) =
    \sgn(p_{\beta}^{(i)}(\zeta-\varepsilon)).
  \end{displaymath}
  For $i=n$, observe that $p_{\alpha}^{(n)}$, $p_{\beta}^{(n)}\in\R$.
  From~(\ref{EQ:hugo}) it follows that $\sgn(p_{\alpha}^{(n)}(\xi)) =
  \sgn(p_{\beta}^{(n)}(\zeta)) = s_n$, and we can conclude that
  $\sgn(p_{\alpha}^{(n)}(\xi-\varepsilon)) =
  \sgn(p_{\beta}^{(n)}(\zeta-\varepsilon))$. Let now $k\in\{n-1,\dots,0\}$, and
  assume that $\sgn(p_{\alpha}^{(k+1)}(\xi-\varepsilon)) =
  \sgn(p_{\beta}^{(k+1)}(\zeta-\varepsilon))$. We have to show that
  $\sgn(p_{\alpha}^{(k)}(\xi-\varepsilon)) =
  \sgn(p_{\beta}^{(k)}(\zeta-\varepsilon))$. We distinguish cases. If $s_k\neq
  0$, then (\ref{EQ:hugo}) ensures that $\sgn(p_{\alpha}^{(k)}(\xi)) =
  \sgn(p_{\beta}^{(k)}(\zeta)) = s_k \neq 0$. If follows that
  $\sgn(p_{\alpha}^{(k)}(\xi-\varepsilon)) =
  \sgn(p_{\beta}^{(k)}(\zeta-\varepsilon))$, because $\varepsilon$ is
  infinitesimal. Assume now that $s_k=0$, and distinguish three cases:
  \begin{enumerate}[label=(\alph*)]
  \item $\sgn(p_{\alpha}^{(k)}(\xi-\varepsilon)) = 0$: Since
    $\sgn(p_{\alpha}^{(k)}(\xi)) = s_k = 0$, and $\varepsilon$ is infinitesimal,
    $p_{\alpha}^{(k)}$ is the zero polynomial. Thus, $p_{\alpha}^{(k+1)}$ is the
    zero polynomial as well. The induction hypothesis and our assumption $s_k=0$
    yield that $p_{\beta}^{(k+1)}$ is the zero polynomial as well. This means
    that $p_{\beta}^{(k)}$ is a constant polynomial. On the other hand,
    $\sgn(p_{\beta}^{(k)}(\zeta)) = s_k = 0$. Therefore, $p_{\beta}^{(k)}$ is
    the zero polynomial, in particular $\sgn(p_{\beta}^{(k)}(\zeta-\varepsilon))
    = 0$.
  \item $\sgn(p_{\alpha}^{(k)}(\xi-\varepsilon)) = 1$: Since $\varepsilon$ is
    positive and infinitesimal and $p_{\alpha}^{(k)}(\xi-\varepsilon) >
    p_{\alpha}^{(k)}(\xi)$, it follows that $p_{\alpha}^{(k)}$ is decreasing at
    $\xi-\varepsilon$. Therefore, we have
    $\sgn(p_{\alpha}^{(k+1)}(\xi-\varepsilon)) = -1$. By the induction
    hypothesis, it follows that $\sgn(p_{\beta}^{(k+1)}(\zeta-\varepsilon)) =
    -1$. Therefore, $p_{\beta}^{(k+1)}$ is decreasing at $\zeta-\varepsilon$.
    Finally, our assumption $s_k=0$ ensures that $\sgn(p_{\beta}^{(k)}(\zeta)) =
    s_k = 0$. Since $\varepsilon$ is infinitesimal, we have
    $p_{\beta}^{(k)}(\zeta-\varepsilon) > 0$, i.e.,
    $\sgn(p_{\beta}^{(k)}(\zeta-\varepsilon)) = 1$.
  \item $\sgn(p_{\alpha}^{(k)}(\xi-\varepsilon)) = -1$: Similar to (b).\qedhere
  \end{enumerate}
\end{proof}

Consider a parametric root $(p,\sgnseq{s})$, and let $\alpha_1$, \dots,
$\alpha_m\in\R$ be such that $\guard{p}{\sgnseq{s}}$ holds. By
Lemma~\ref{LE:consistent2} there exists a unique $\xi\in\R$ such that
\ref{C1}--\ref{C3} hold. We define the \emph{left sign} of $(p,\sgnseq{s})$ as
$\sgnl(p,\sgnseq{s}) = \sgn(p\evama(\xi-\varepsilon))$. Similarly, the
\emph{right sign} of $(p,\sgnseq{s})$ is defined as $\sgnr(p,\sgnseq{s}) =
\sgn(p\evama(\xi+\varepsilon))$. This is well-defined by
Lemma~\ref{LE:sgnclose}. Notice that~\ref{C1} ensures that both
$\sgnl(p,\sgnseq{s})$ and $\sgnr(p,\sgnseq{s})$ cannot be zero.

The left and the right sign of $(p,\sgnseq{s})$ can be computed as follows:
\begin{enumerate}
\item Find $\alpha_1$, \dots,~$\alpha_m\in\R$ satisfying
  $\guard{p}{\sgnseq{s}}$.
\item Compute an isolating interval $\left]l,r\right[$ of the root $\xi$ of
  $p\evama$ identified by $\sgnseq{s}$ such that $p\evama(l)\neq 0$ and
  $p\evama(r)\neq 0$.
\end{enumerate}
Then $\sgnl(p,\sgnseq{s}) = \sgn(p\evama(l))$ and $\sgnr(p,\sgnseq{s}) =
\sgn(p\evama(r))$.

\section{Elimination Terms and Elimination Sets}\label{SE:framework}
Let $n\in\N\setminus\{0\}$, and let $\varphi$ be an $\land$-$\lor$-combination
of atomic formulas $\{p_i\mathrel{\varrho_i}0\}_{i\in I}$, where $p_i\in\Zym[x]$
and $\varrho_i\in\{=,\neq,<,\leqslant,\geqslant,>\}$. Assume that $\deg
p_i\leqslant n$ for all $i\in I$. We say that $\varphi$ is of degree at most $n$
in $x$. In this section we are going to describe a method for eliminating
$\exists x$ from the formula $\exists x\varphi$.

Our method is not self-contained. It depends on an algorithm $\qealg$ that is
capable of eliminating a single existential quantifier from formulas of degree
$n$ in $x$, which have a very particular shape. We are going to describe these
formulas along with the definition of virtual substitution of elimination terms.
Later in Section~\ref{SE:examples} we will show that it is even sufficient to
consider only finitely many such formulas. Since the real numbers admit
effective quantifier elimination, it is clear that such an algorithm $\qealg$
exists. The key challenge with the approach proposed here is going to be to find
\emph{short} quantifier-free equivalents, possibly including considerable human
intelligence at least for post-processing.

We start with the description of the set $E_i$ of \emph{elimination terms}
generated by a particular atomic formula $(p_i\mathrel{\varrho_i}0)$ in $\varphi$:
\begin{displaymath}
  E_i =
  \begin{cases}
    \{(p_i,\sgnseq{s}) \mid \text{$(p_i,\sgnseq{s})$
      is a parametric root of $p_i$}\}
    &\text{if $\varrho_i\in\{=,\leqslant,\geqslant\}$},\\
    \{(p_i,\sgnseq{s})+\varepsilon \mid \text{$(p_i,\sgnseq{s})$
      is a parametric root of $p_i$}\}
    &\text{if $\varrho_i\in\{\neq,<,>\}$.}
  \end{cases}
\end{displaymath}

Consider an atomic formula $(q\mathrel{\varrho}0)$, where
$q=b_nx^n+\dots+b_1x+b_0\in\Zym[x]$. The \emph{virtual substitution}
$(q\mathrel{\varrho}0)[\msub{x}{r}]$ of an elimination term $r=(p,\sgnseq{s})$
for $x$ into $(q\mathrel{\varrho}0)$ is the quantifier-free formula computed by
$\qealg$ for
\begin{displaymath}
  \exists x\Bigl(
  p = 0 \land
  \bigwedge_{i=1}^{|\sgnseq{s}|} (p^{(i)}\mathrel{\sigma(s_i)}0) \land
  (q\mathrel{\varrho}0)
  \Bigr).
\end{displaymath}

Let $N$ and $P$ be the sets of all sign sequences $\sgnseq{t}$ consistent with
$q$ such that $\sgnr(q,\sgnseq{t}) < 0$ and $\sgnr(q,\sgnseq{t}) > 0$,
respectively. Furthermore, let be $\nu((p, \sgnseq{s}), (q, \sgnseq{t}))$ the
quantifier-free formula computed by $\qealg$ for
\begin{displaymath}
  \exists x\Bigl(
  p = 0 \land
  \bigwedge_{i=1}^{|\sgnseq{s}|} (p^{(i)}\mathrel{\sigma(s_i)}0) \land
  q = 0 \land
  \bigwedge_{i=1}^{|\sgnseq{t}|} (q^{(i)}\mathrel{\sigma(t_i)}0)
  \Bigr).
\end{displaymath}
In these terms, the virtual substitution
$(q\mathrel{\varrho}0)[\msub{x}{r+\varepsilon}]$ of $r+\varepsilon$ for $x$ is
defined as follows:
\begin{align*}
  (q = 0)[\msub{x}{r+\varepsilon}] &:\quad
  b_n = 0 \land \dots \land b_0 = 0\\
  (q \neq 0)[\msub{x}{r+\varepsilon}] &:\quad
  b_n \neq 0\lor \dots \lor b_0 \neq 0\\
  (q < 0)[\msub{x}{r+\varepsilon}] &:\quad (q < 0)[\msub{x}{r}] \lor
  \bigvee_{\sgnseq{t}\in N}\nu((p, \sgnseq{s}), (q, \sgnseq{t}))\\
  (q > 0)[\msub{x}{r+\varepsilon}] &:\quad (q > 0)[\msub{x}{r}] \lor
  \bigvee_{\sgnseq{t}\in P}\nu((p, \sgnseq{s}), (q, \sgnseq{t}))\\
  (q \leqslant 0)[\msub{x}{r+\varepsilon}] &:\quad (q < 0)[\msub{x}{r+\varepsilon}] \lor
  (q = 0)[\msub{x}{r+\varepsilon}]\\
  (q \geqslant 0)[\msub{x}{r+\varepsilon}] &:\quad (q > 0)[\msub{x}{r+\varepsilon}] \lor (q
  = 0)[\msub{x}{r+\varepsilon}].
\end{align*}

Note that, in contrast to~\cite[Section 3]{Weispfenning:1997a}, our definition
of virtual substitution of $r+\varepsilon$ is not recursive. The deeply nested
subformulas introduced with the recursive definition of $\nu$ for quadratic
quantifier elimination in~\cite[Section 3]{Weispfenning:1997a} are a
considerable obstacle for simplification. A good choice of $\qealg$ might help
to overcome this.

Another somewhat special elimination term is $-\infty$. It is not generated by
any atomic formula, but will generally occur in every elimination set. The
virtual substitution $(q\mathrel{\varrho} 0)[\msub{x}{-\infty}]$ of $-\infty$
for $x$ is defined as follows:
\begin{align*}
  (q = 0)[\msub{x}{-\infty}]&:\quad b_n = 0\land\dots\land b_0 = 0\\
  (q \neq 0)[\msub{x}{-\infty}]&:\quad b_n \neq 0\lor\dots\lor b_0 \neq 0\\
  (q < 0)[\msub{x}{-\infty}]&:\quad \mu_{<}(q,n)\lor\dots\lor\mu_{<}(q,0)\\
  (q > 0)[\msub{x}{-\infty}]&:\quad \mu_{>}(q,n)\lor\dots\lor\mu_{>}(q,0)\\
  (q \leqslant 0)[\msub{x}{-\infty}]&:\quad (q < 0)[\msub{x}{-\infty}] \lor (q =
  0)[\msub{x}{-\infty}]\\
  (q \geqslant 0)[\msub{x}{-\infty}]&:\quad (q > 0)[\msub{x}{-\infty}] \lor (q =
  0)[\msub{x}{-\infty}],
\end{align*}
where for $k\in\{0,\dots,n\}$ we define
\begin{align*}
  \mu_{<}(q, k)&:\quad b_n = 0 \land \dots \land b_{k+1} = 0 \land (-1)^k b_k < 0\\
  \mu_{>}(q, k)&:\quad b_n = 0 \land \dots \land b_{k+1} = 0 \land (-1)^k b_k > 0.
\end{align*}

As usual, the key idea of virtual substitution is to map not terms to terms, but
atomic formulas to quantifier-free formulas, which gives us the freedom to
describe its results in the Tarski language. That mapping on atomic formulas
naturally induces virtual substitution on arbitrary quantifier-free formulas.

In the sequel we prove that the terms generated by all atomic formulas occurring
in $\varphi$ together with $-\infty$ constitute an \emph{elimination set} $E$ for
$\exists x\varphi$ in the following sense:
\begin{displaymath}
  \R\models\exists x\varphi
  \longleftrightarrow
  \bigvee_{e\in E}\varphi[\msub{x}{e}].
\end{displaymath}

The following lemma is an immediate consequence of our definition of virtual
substitution:
\begin{lemma}\label{LE:implguard}
  Let $(q\mathrel{\varrho}0)$ be an atomic formula, where $q\in\Zym[x]$.
  Consider a parametric root $(p,\sgnseq{s})$. Assign values $\alpha_1$,
  \dots,~$\alpha_m\in\R$ to the variables $y_1$, \dots,~$y_m$. Then the
  following hold:
  \begin{enumerate}[label=(\roman*)]
  \item If $(q\mathrel{\varrho}0)[\msub{x}{(p,\sgnseq{s})}]$ holds, then
    $\guard{p}{\sgnseq{s}}$ holds as well.
  \item If $(q\mathrel{\varrho}0)[\msub{x}{(p,\sgnseq{s})+\varepsilon}]$ holds,
    and $\varrho\in\{<,>\}$, then $\guard{p}{\sgnseq{s}}$ holds as well.\qed
  \end{enumerate}
\end{lemma}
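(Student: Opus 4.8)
The statement asserts that whenever a virtual substitution of a parametric root $(p,\sgnseq{s})$ (possibly shifted by $\varepsilon$, in the strict-inequality case) into an atomic formula $(q\mathrel{\varrho}0)$ holds under an assignment $\alpha_1,\dots,\alpha_m$, then the guard $\guard{p}{\sgnseq{s}}$ must already hold under that assignment. Since the lemma is advertised as an immediate consequence of the definitions, the plan is simply to trace, case by case, the defining formulas and observe that the guard appears as a conjunct (or as a consequence of a conjunct) of each of them.

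For part (i), recall that $(q\mathrel{\varrho}0)[\msub{x}{(p,\sgnseq{s})}]$ is \emph{defined} to be the quantifier-free formula $\qealg$ produces for
\begin{displaymath}
  \exists x\Bigl(p = 0 \land
  \bigwedge_{i=1}^{|\sgnseq{s}|} (p^{(i)}\mathrel{\sigma(s_i)}0) \land
  (q\mathrel{\varrho}0)\Bigr).
\end{displaymath}
Since $\qealg$ is a correct single-quantifier-elimination algorithm, the holding of this formula at $\alpha_1,\dots,\alpha_m$ is equivalent to the existence of some $\xi\in\R$ satisfying $p\evama(\xi)=0$, the sign conditions on the derivatives, and $(q\evama(\xi)\mathrel{\varrho}0)$. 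In particular such a $\xi$ witnesses $\exists x\bigl(p=0\land\bigwedge_i(p^{(i)}\mathrel{\sigma(s_i)}0)\bigr)$, which is exactly the Tarski formula whose quantifier-free equivalent is $\guard{p}{\sgnseq{s}}$; hence $\guard{p}{\sgnseq{s}}$ holds at $\alpha_1,\dots,\alpha_m$. (One could alternatively invoke Lemma~\ref{LE:consistent2}: the witness $\xi$ together with $\sgnseq{s}\neq(0,\dots,0)$ forces conditions \ref{C1}--\ref{C3}, which gives (ii) of that lemma and therefore (i), i.e.\ the guard.)

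For part (ii), with $\varrho\in\{<,>\}$, unfold the definition of $(q\mathrel{\varrho}0)[\msub{x}{(p,\sgnseq{s})+\varepsilon}]$. Take $\varrho$ to be $<$, the case $>$ being symmetric. The defining formula is the disjunction
\begin{displaymath}
  (q < 0)[\msub{x}{(p,\sgnseq{s})}] \lor
  \bigvee_{\sgnseq{t}\in N}\nu((p,\sgnseq{s}),(q,\sgnseq{t})).
\end{displaymath}
If the first disjunct holds, part (i) already gives $\guard{p}{\sgnseq{s}}$. If some disjunct $\nu((p,\sgnseq{s}),(q,\sgnseq{t}))$ holds, then by correctness of $\qealg$ there is a witness $\xi$ with $p\evama(\xi)=0$ and the derivative sign conditions for $\sgnseq{s}$ satisfied; as before this $\xi$ witnesses the Tarski formula defining $\guard{p}{\sgnseq{s}}$, so the guard holds. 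In every case the guard follows, completing (ii).

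The only point requiring a little care — and the closest thing to an obstacle — is making explicit that "the virtual substitution holds at $\alpha_1,\dots,\alpha_m$'' means precisely "the displayed $\exists x$-formula evaluated at $\alpha_1,\dots,\alpha_m$ is true,'' which is licensed by the stipulated correctness of $\qealg$; once this is said, every conjunct bearing on $p$ is visibly present in each defining formula, and the result is immediate. Note also that the hypothesis $\varrho\in\{<,>\}$ in (ii) is genuinely needed: for $\varrho$ equal to $=$ or $\neq$, the substitution of $r+\varepsilon$ is defined purely in terms of the coefficients $b_n,\dots,b_0$ of $q$ and contains no conjunct about $p$, so the guard need not hold.
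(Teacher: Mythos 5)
Your proof is correct and follows exactly the route the paper intends: the paper states Lemma~\ref{LE:implguard} as an immediate consequence of the definitions (with no written proof), and your unfolding of the defining $\exists x$-formulas for $[\msub{x}{(p,\sgnseq{s})}]$, for $\nu$, and for $[\msub{x}{(p,\sgnseq{s})+\varepsilon}]$, observing that any witness $\xi$ also witnesses the Tarski formula whose quantifier-free equivalent is $\guard{p}{\sgnseq{s}}$, is precisely that argument spelled out. Your closing remark on why $\varrho\in\{<,>\}$ is needed in (ii) is also accurate.
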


\begin{lemma}\label{LE:semantics1}
  Consider $\{p_k\mathrel{\varrho_k}0\}_{k\in K}$, where $K$ is finite,
  $p_k\in\Zym[x]$, and $\varrho_k\in\{=,\neq,<,\leqslant,\geqslant,>\}$. Assign
  values $\alpha_1$, \dots,~$\alpha_m\in\R$ to the variables $y_1$,
  \dots,~$y_m$. Consider a parametric root $(p,\sgnseq{s})$.
  \begin{enumerate}[label=(\roman*)]
  \item There exists $\vartheta\in\R$ such that for all $k\in K$ the following
    holds: If $(p_k\mathrel{\varrho_k}0)[\msub{x}{(p,\sgnseq{s})}]$ holds, then
    $\vartheta$ satisfies $(p_k\mathrel{\varrho_k}0)$.
  \item There exists $\zeta\in\R$ such that for all $k\in K$ the following
    holds: If $(p_k\mathrel{\varrho_k}0)[\msub{x}{(p,\sgnseq{s})+\varepsilon}]$
    holds, then $\zeta$ satisfies $(p_k\mathrel{\varrho_k}0)$.
  \end{enumerate}
\end{lemma}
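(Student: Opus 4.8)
The plan is to prove both parts by exhibiting a single real number —
essentially the root $\xi$ described by $(p,\sgnseq{s})$, possibly perturbed by
an infinitesimal — that simultaneously witnesses all the atomic formulas whose
virtual substitutions hold. Fix $\alpha_1$, \dots,~$\alpha_m$, and suppose the
premise of (i) is nonvacuous for some $k$, so that by
Lemma~\ref{LE:implguard}(i) the guard $\guard{p}{\sgnseq{s}}$ holds at
$\alpha_1$, \dots,~$\alpha_m$. Then Lemma~\ref{LE:consistent2} gives a unique
$\xi\in\R$ satisfying \ref{C1}--\ref{C3}; this $\xi$ is my candidate
$\vartheta$. (If the premise is vacuous for every $k$, any $\vartheta\in\R$
works.) The claim to verify is: for each $k\in K$, if
$(p_k\mathrel{\varrho_k}0)[\msub{x}{(p,\sgnseq{s})}]$ holds then $p_k\evama$ at
$\xi$ stands in relation $\varrho_k$ to $0$.

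The heart of the argument is a local correctness statement for the virtual
substitution of a \emph{single} atomic formula at a \emph{single} parametric
root, which I would phrase as a small internal claim: if
$(q\mathrel{\varrho}0)[\msub{x}{(p,\sgnseq{s})}]$ holds at $\alpha_1$,
\dots,~$\alpha_m$, then $\guard{p}{\sgnseq{s}}$ holds (so $\xi$ is defined) and
$q\evama$ satisfies $(q\mathrel{\varrho}0)$ at $\xi$. This follows by unwinding
the definition: $(q\mathrel{\varrho}0)[\msub{x}{(p,\sgnseq{s})}]$ is by
definition the quantifier-free formula $\qealg$ produces for $\exists
x\bigl(p=0\land\bigwedge_i(p^{(i)}\mathrel{\sigma(s_i)}0)\land
(q\mathrel{\varrho}0)\bigr)$, so its truth at $\alpha_1$, \dots,~$\alpha_m$ is
equivalent to the existence of some $x=\eta$ making $p\evama=0$,
$\sgn_{\eta}(p\evama',n)=\sgnseq{s}$, and $q\evama(\eta)\mathrel{\varrho}0$. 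But
the first two conditions are exactly \ref{C2} and \ref{C3}; together with the
sign sequence being nonzero they force \ref{C1}, hence $\eta$ is the unique
$\xi$ of Lemma~\ref{LE:consistent2}, and therefore $q\evama(\xi)=q\evama(\eta)$
satisfies $\varrho$. Applying this claim with $q=p_k$, $\varrho=\varrho_k$ to
each $k\in K$ for which the premise holds, and noting that the witness $\xi$ is
the \emph{same} for every such $k$ (it depends only on $(p,\sgnseq{s})$ and
$\alpha_1$, \dots,~$\alpha_m$, not on $k$), yields (i) with $\vartheta=\xi$.

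For part (ii) the structure is parallel but with the additional bookkeeping for
the $\varepsilon$-shift. Again, if the premise is vacuous take any $\zeta$;
otherwise, from Lemma~\ref{LE:implguard}(ii) together with the cases
$\varrho_k\in\{=,\neq\}$ — where $(p_k\mathrel{\varrho_k}0)
[\msub{x}{(p,\sgnseq{s})+\varepsilon}]$ asserts that the coefficients of $p_k$
all vanish, resp.\ do not all vanish — one checks the guard need not literally
follow for the $=/\neq$ cases, so I must treat the coefficient conditions
directly. When \emph{some} atomic formula among the premises has $\varrho_k\in
\{<,>,\leqslant,\geqslant\}$ and contributes a disjunct through the
$(q\mathrel{\varrho}0)[\msub{x}{r}]$ branch or a $\nu$-disjunct, the guard holds
and $\xi$ is defined; I then propose $\zeta=\xi+\varepsilon$ in the real closed
field $\R\langle\varepsilon\rangle$ (a positive-infinitesimal extension), using
Lemma~\ref{LE:sgnclose} to see that the signs of each $p_k\evama$ at
$\xi+\varepsilon$ are well-defined and match what the disjuncts of the
$r+\varepsilon$-substitution encode. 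Concretely: the $(q<0)[\msub{x}{r}]$
disjunct forces $q\evama(\xi)<0$, hence $q\evama(\xi+\varepsilon)<0$ by
continuity/infinitesimality; a disjunct $\nu((p,\sgnseq{s}),(q,\sgnseq{t}))$
with $\sgnseq{t}\in N$ forces $\xi$ to be a root of $q\evama$ with
$\sgnr(q,\sgnseq{t})<0$, i.e.\ $q\evama(\xi+\varepsilon)<0$; and the $=,\neq$
coefficient conditions make $q\evama$ identically zero, resp.\ not the zero
polynomial, which handles those relations at $\xi+\varepsilon$ as well. Since
$\varepsilon$ is infinitesimal over $\R$, the single element $\zeta=\xi+
\varepsilon$ of $\R\langle\varepsilon\rangle$ serves all $k$ uniformly, and
because each $p_k\evama$ has only finitely many roots, $\zeta$ lies strictly
between $\xi$ and the next root, so the signs computed via
Lemma~\ref{LE:sgnclose} are the correct ones.

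The main obstacle I anticipate is not any single calculation but the
\emph{uniformity} of the witness together with the subtle interplay in part~(ii)
between (a) the nonrecursive $\nu$-disjuncts, which encode "$q\evama$ vanishes
at $\xi$ and its right sign is negative/positive," and (b) the fall-through
disjunct $(q\mathrel{\varrho}0)[\msub{x}{r}]$, which encodes "$q\evama$ already
has the strict sign at $\xi$ itself" — one must check these two cases together
cover exactly the condition $q\evama(\xi+\varepsilon)<0$ (resp.\ $>0$), no more
and no less, and in particular that when the guard fails the corresponding
disjuncts are all false so no spurious witness is demanded. Handling the
$\leqslant,\geqslant,=,\neq$ relations is then a routine reduction to the strict
and equality cases via the definitions. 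I would also be careful to state at the
outset that all reasoning about $\xi\pm\varepsilon$ takes place in a real closed
field extension $\R\langle\varepsilon\rangle$ with $\varepsilon$ a positive
infinitesimal, so that phrases like "$p_k\evama(\xi+\varepsilon)$ satisfies
$\varrho_k$" are literally meaningful and transfer back to $\R$ by model
completeness when needed downstream.
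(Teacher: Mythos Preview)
Your argument for part~(i) is correct and essentially identical to the paper's: use Lemma~\ref{LE:implguard}(i) to get the guard, Lemma~\ref{LE:consistent2} to get the unique $\xi$, and then observe that any witness for the existential in the definition of $(p_k\mathrel{\varrho_k}0)[\msub{x}{(p,\sgnseq{s})}]$ must equal $\xi$, so $\vartheta=\xi$ works uniformly.

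For part~(ii) your disjunct-by-disjunct analysis of the $r+\varepsilon$ substitution is the right one and matches the paper's, but there is a gap in the choice of witness. The lemma demands $\zeta\in\R$, and $\xi+\varepsilon\in\R\langle\varepsilon\rangle$ does not qualify; invoking ``model completeness \dots\ downstream'' does not discharge the obligation here, since the statement you are proving is about a real $\zeta$. The paper avoids the detour entirely: once the guard holds it sets
\[
  \xi'=\min\{\delta\mid j\in K,\ \deg p_j\evama>0,\ p_j\evama(\delta)=0,\ \delta>\xi\}
\]
and picks any real $\zeta\in{]}\xi,\xi'{[}$. Your case analysis (the $(q<0)[\msub{x}{r}]$ branch gives $p_k\evama(\xi)<0$, hence $p_k\evama(\zeta)<0$ since no root lies in $[\xi,\zeta]$; a $\nu$-disjunct with $\sgnseq{t}\in N$ gives $p_k\evama(\xi)=0$ with negative right sign, hence $p_k\evama(\zeta)<0$; etc.) then goes through verbatim for this real $\zeta$, and Lemma~\ref{LE:sgnclose} is only needed, as in the paper, to make $\sgnr$ well-defined.

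A second point: your case split for when $\xi$ exists is slightly off. Lemma~\ref{LE:implguard}(ii) only covers $\varrho\in\{<,>\}$, and the $\leqslant,\geqslant$ substitutions are \emph{defined} as disjunctions of the $<,>$ and $=$ cases, so they do not by themselves force the guard. The paper's split is therefore: either some $(p_l<0)[\msub{x}{r+\varepsilon}]$ or $(p_l>0)[\msub{x}{r+\varepsilon}]$ holds (then $\xi$ exists and one proceeds as above), or none does. In the latter case every holding premise reduces, via the definitions, to a coefficient condition on $p_k$, and one must still exhibit a real $\zeta$---in particular for $\varrho_k={\neq}$ the polynomial $p_k\evama$ is merely nonzero, so $\zeta$ must avoid its finitely many roots. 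You flag this issue but do not actually produce $\zeta$ in that branch.
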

\begin{proof}
  To begin the proof of (i), note that if there is no $k\in K$ such that
  $(p_k\mathrel{\varrho_k}0)[\msub{x}{(p,\sgnseq{s})}]$ holds, there is nothing
  to prove. Assume w.l.o.g. that
  $(p_1\mathrel{\varrho_1}0)[\msub{x}{(p,\sgnseq{s})}]$ holds.
  Lemma~\ref{LE:implguard} implies that $\guard{p}{\sgnseq{s}}$ holds as well.
  Consequently, Lemma~\ref{LE:consistent2} ensures that there is a unique
  $\xi\in\R$ such that \ref{C1}--\ref{C3} hold. Let now $k\in K$ and assume that
  $(p_k\mathrel{\varrho_k}0)[\msub{x}{(p,\sgnseq{s})}]$, i.e.,
  \begin{displaymath}
    \exists x\Bigl(
    p = 0 \land
    \bigwedge_{i=1}^{|\sgnseq{s}|} (p^{(i)}\mathrel{\sigma(s_i)}0) \land
    (p_k\mathrel{\varrho_k}0)
    \Bigr),
  \end{displaymath}
  holds. The uniqueness of $\xi$ satisfying conditions \ref{C1}--\ref{C3} now
  ensures that $(p_k\mathrel{\varrho_k}0)$ holds for $\xi$. This proves (i).

  To prove (ii) we first assume that there is some $l\in K$ such that
  $(p_l<0)[\msub{x}{(p,\sgnseq{s})+\varepsilon}]$ or
  $(p_l>0)[\msub{x}{(p,\sgnseq{s})+\varepsilon}]$ holds. Again,
  Lemma~\ref{LE:implguard} ensures that $\guard{p}{\sgnseq{s}}$ holds, and
  Lemma~\ref{LE:consistent2} ensures that there is a unique $\xi\in\R$ such that
  conditions \ref{C1}--\ref{C3} hold. Pick $\zeta$ from the open interval
  $]\xi,\xi'[$, where
  \begin{displaymath}
    \xi' = \min\{\delta\mid j\in K\land\deg p_j>0 \land p_j(\delta) = 0 \land
    \delta > \xi\}.
  \end{displaymath}
  Let now $k\in K$, and assume that
  $(p_k\mathrel{\varrho_k}0)[\msub{x}{(p,\sgnseq{s})+\varepsilon}]$ holds. If
  $\varrho_k$ is ``$=$,'' then $p_k$ is identically zero, so $(p_k=0)$ trivially
  holds for $\zeta$. If $\varrho_k$ is ``$<$,'' then there are two cases to
  consider:
  \begin{enumerate}[label=(\alph*)]
  \item In the first case $(p_k<0)[\msub{x}{(p,\sgnseq{s})}]$ holds. Since
    $\xi\in\R$ is unique satisfying conditions \ref{C1}--\ref{C3}, we deduce
    that $(p_k<0)$ holds at $\xi$. The choice of $\zeta$ now guarantees that
    there is no root of $p_k$ in $[\xi,\zeta]$, so we obtain that $(p_k<0)$
    holds at $\zeta$, as well.
  \item In the second case
    \begin{displaymath}
      \exists x\Bigl(
      p = 0 \land
      \bigwedge_{i=1}^{|\sgnseq{s}|} (p^{(i)}\mathrel{\sigma(s_i)}0) \land
      p_k = 0 \land
      \bigwedge_{i=1}^{|\sgnseq{t}|} (p_k^{(i)}\mathrel{\sigma(t_i)}0)
      \Bigr)
    \end{displaymath}
    holds for some $\sgnseq{t}$, and $\sgnr((p_k,\sgnseq{t})) = -1$. Again, from
    the uniqueness of $\xi\in\R$ satisfying \ref{C1}--\ref{C3} we deduce that
    $p_k=0$ holds at $\xi$. Finally, since the right sign of $(p_k,\sgnseq{t})$
    is negative, and there is no root of $p_k$ in $]\xi,\zeta]$, so we conclude
    that $p_k<0$ holds at $\zeta$.
  \end{enumerate}
  If $\varrho_k$ is ``$>$,'' then the proof is done similarly as for ``$<$.'' If
  $\varrho_k$ is ``$\neq$,'' we apply the lemma either to $(p_k >
  0)[\msub{x}{(p,\sgnseq{s}})+\varepsilon]$ or to $(p_k <
  0)[\msub{x}{(p,\sgnseq{s}})+\varepsilon]$ to obtain that $(p_k\neq 0)$ holds
  for $\zeta$. When $\varrho\in\{\leqslant,\geqslant\}$ the proof is similar.

  To finish the proof of (ii), we assume that for every $l\in K$ the formulas
  $(p_l<0)[\msub{x}{(p,\sgnseq{s})+\varepsilon}]$ and
  $(p_l>0)[\msub{x}{(p,\sgnseq{s})+\varepsilon}]$ do not hold. Let $\zeta$ be
  any real number. Let now $k\in K$, and assume that
  $(p_k\mathrel{\varrho_k}0)[\msub{x}{(p,\sgnseq{s})+\varepsilon}]$ holds. Our
  assumption together with the definition of
  $[\msub{x}{(p,\sgnseq{s})+\varepsilon}]$ imply that
  $(p_k\mathrel{\varrho_k}0)[\msub{x}{(p,\sgnseq{s})+\varepsilon}]$ is
  equivalent to $(p_k=0)[\msub{x}{(p,\sgnseq{s})+\varepsilon}]$, which shows
  that $(p_k=0)$ holds for $\zeta$.
\end{proof}

\begin{lemma}\label{LE:semantics2}
  Let $(q\mathrel{\varrho}0)$ be an atomic formula, where $q\in\Zym[x]$, and
  $\varrho\in\{=,\neq,<,\leqslant,\geqslant,>\}$. Consider a parametric root
  $(p,\sgnseq{s})$. Assign values $\alpha_1$, \dots,~$\alpha_m\in\R$ to the
  variables $y_1$, \dots,~$y_m$. Assume that $\guard{p}{\sgnseq{s}}$ holds, and
  denote the unique real number satisfying conditions \ref{C1}--\ref{C3} by
  $\xi$. Then the following hold:
  \begin{enumerate}[label=(\roman*)]
  \item If $\xi$ satisfies $(q\mathrel{\varrho}0)$, then
    $(q\mathrel{\varrho}0)[\msub{x}{(p,\sgnseq{s})}]$ holds.
  \item Let $\xi'\in\R$ be such that all $\zeta\in\left]\xi,\xi'\right[$ satisfy
    $(q\mathrel{\varrho}0)$. Then
    $(q\mathrel{\varrho}0)[\msub{x}{(p,\sgnseq{s})+\varepsilon}]$ holds.
  \end{enumerate}
\end{lemma}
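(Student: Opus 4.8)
The plan is to prove (i) by directly exhibiting a witness, and to prove (ii) by a case analysis on $\varrho$ that, in each case, identifies one of the disjuncts occurring in the definition of the virtual substitution of $(p,\sgnseq{s})+\varepsilon$. For (i), recall that $(q\mathrel{\varrho}0)[\msub{x}{(p,\sgnseq{s})}]$ is, by definition, a quantifier-free equivalent of $\exists x\bigl(p=0\land\bigwedge_{i=1}^{|\sgnseq{s}|}(p^{(i)}\mathrel{\sigma(s_i)}0)\land(q\mathrel{\varrho}0)\bigr)$. Since $\xi$ satisfies \ref{C1}--\ref{C3}, it satisfies the first two conjuncts under the assignment $\evama$, and by hypothesis it also satisfies the third; thus $\xi$ witnesses this existential formula for $\alpha_1,\dots,\alpha_m$, and so its quantifier-free equivalent holds there as well. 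This argument is uniform in $\varrho$.

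For (ii) I would assume $\xi<\xi'$, so that $]\xi,\xi'[$ is nonempty, and split on $\varrho$. The cases ``$=$'' and ``$\neq$'' are immediate: since $]\xi,\xi'[$ is infinite, $q\evama$ vanishing throughout it forces $q\evama$ to be the zero polynomial, so all coefficients of $q$ vanish under $\evama$ and $(q=0)[\msub{x}{(p,\sgnseq{s})+\varepsilon}]$ --- namely $b_n=0\land\dots\land b_0=0$ --- holds; dually, $q\evama$ being nonzero somewhere on $]\xi,\xi'[$ makes some coefficient of $q$ nonzero under $\evama$, so $(q\neq0)[\msub{x}{(p,\sgnseq{s})+\varepsilon}]$ holds.

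The core is the case ``$<$'' (``$>$'' is symmetric, with $P$ in place of $N$): here $q\evama<0$ on $]\xi,\xi'[$, and by continuity $q\evama(\xi)\le0$. If $q\evama(\xi)<0$, then $\xi$ satisfies $(q<0)$, and part~(i) already yields that $(q<0)[\msub{x}{(p,\sgnseq{s})}]$ holds, one of the disjuncts of $(q<0)[\msub{x}{(p,\sgnseq{s})+\varepsilon}]$. If $q\evama(\xi)=0$, then $q\evama$ vanishes at $\xi$ but is negative on $]\xi,\xi'[$, hence is non-constant, so $\deg q\evama>0$; I would put $\sgnseq{t}=\sgn_\xi(q\evama',\deg q)$. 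Because $1\le\deg q\evama\le\deg q$, the $(\deg q\evama)$-th entry of $\sgnseq{t}$ is the sign of a nonzero constant, so $\sgnseq{t}\neq(0,\dots,0)$; together with $q\evama(\xi)=0$ and $\deg q\evama>0$ this shows that $\sgnseq{t}$ is consistent with $q$, witnessed by the present $\alpha_1,\dots,\alpha_m$ and $\xi$, so that $(q,\sgnseq{t})$ is a parametric root. Since $q\evama<0$ immediately to the right of $\xi$, we get $\sgnr(q,\sgnseq{t})=-1$, i.e.\ $\sgnseq{t}\in N$. Finally, $\xi$ witnesses the existential formula defining $\nu((p,\sgnseq{s}),(q,\sgnseq{t}))$, as it satisfies $p=0$ and $\bigwedge_{i=1}^{|\sgnseq{s}|}(p^{(i)}\mathrel{\sigma(s_i)}0)$ by \ref{C1}--\ref{C3} and $q=0$ and $\bigwedge_{i=1}^{|\sgnseq{t}|}(q^{(i)}\mathrel{\sigma(t_i)}0)$ by the choice of $\sgnseq{t}$; hence $\nu((p,\sgnseq{s}),(q,\sgnseq{t}))$ holds, again a disjunct of $(q<0)[\msub{x}{(p,\sgnseq{s})+\varepsilon}]$.

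For ``$\leqslant$'' (``$\geqslant$'' is symmetric): either $q\evama$ is the zero polynomial, whence $(q=0)[\msub{x}{(p,\sgnseq{s})+\varepsilon}]$ holds as above; or $q\evama$ has only finitely many roots, in which case I would, after possibly shrinking $\xi'$, arrange that $q\evama$ has no root in $]\xi,\xi'[$, so that $q\evama<0$ strictly there, and apply the already-settled case ``$<$''. Either way $(q\leqslant0)[\msub{x}{(p,\sgnseq{s})+\varepsilon}]$, being $(q<0)[\msub{x}{(p,\sgnseq{s})+\varepsilon}]\lor(q=0)[\msub{x}{(p,\sgnseq{s})+\varepsilon}]$, holds. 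I expect the main obstacle to be the subcase $q\evama(\xi)=0$ of the strict cases: one must construct by hand the Thom code $\sgnseq{t}$ describing $\xi$ as a root of $q\evama$, verify that it is genuinely consistent with $q$ --- the non-vanishing of the sign sequence is exactly where $\deg q\evama\ge1$ enters, alongside checking \ref{C1}--\ref{C3} --- verify that $\sgnseq{t}\in N$ (resp.\ $P$) from the one-sided sign of $q\evama$ near $\xi$, and check that $\xi$ witnesses the defining formula of $\nu$. The remaining steps are routine continuity and degree arguments.
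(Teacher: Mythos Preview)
Your proof is correct and follows essentially the same approach as the paper: (i) is proved by exhibiting $\xi$ as a witness for the existential formula underlying the virtual substitution, and (ii) is handled by a case analysis on $\varrho$, with the key strict-inequality case splitting on whether $q\evama(\xi)$ is strictly negative (reducing to (i)) or zero (producing a Thom code $\sgnseq{t}$ for $\xi$ as a root of $q$ with the correct right sign, so that the $\nu$-disjunct applies). Your treatment is in fact more careful than the paper's at two points: you explicitly construct $\sgnseq{t}$ and verify $\sgnseq{t}\neq(0,\dots,0)$ and consistency with $q$, where the paper simply asserts ``there is a Thom code $\sgnseq{t}$''; and you make the implicit assumption $\xi<\xi'$ explicit and spell out the reduction for ``$\leqslant$''/``$\geqslant$'' via shrinking $\xi'$, where the paper just says these cases ``boil down to cases which we have just proven.''
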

\begin{proof}
  Since $(q\mathrel{\varrho}0)$ holds for $\xi\in\R$, and $\xi$ is unique such
  that \ref{C1}--\ref{C3} hold, we obtain that
  \begin{displaymath}
    \Bigl(
    p = 0 \land
    \bigwedge_{i=1}^{|\sgnseq{s}|} (p^{(i)}\mathrel{\sigma(s_i)}0) \land
    (q\mathrel{\varrho}0)
    \Bigr).
  \end{displaymath}
  holds for $\xi$. By the definition of $[\msub{x}{(p,\sgnseq{s})}]$, we see
  that $(q\mathrel{\varrho}0)[\msub{x}{(p,\sgnseq{s})}]$ holds. This proves (i).

  To prove (ii), we begin with the case when $\varrho$ is ``$=$.'' Since $(q=0)$
  holds in a non-empty interval, $q$ is the zero polynomial, so
  $(q=0)[\msub{x}{(p,\sgnseq{s})+\varepsilon}]$ follows. If $\varrho$ is
  ``$\neq$,'' $(q\neq 0)$ holds in a non-empty interval, so $q$ is a non-zero
  polynomial. Therefore, $(q\mathrel\neq
  0)[\msub{x}{(p,\sgnseq{s})+\varepsilon}]$ holds. If $\varrho$ is ``$<$,''
  there are two possibilities. If $(q<0)$ holds at $\xi$, then by (i) we have
  that $(q<0)[\msub{x}{(p,\sgnseq{s})}]$ holds, so
  $(q<0)[\msub{x}{(p,\sgnseq{s}})+\varepsilon]$ holds, as well. If $(q<0)$ does
  not hold for $\xi$, then $(q=0)$ holds for $\xi$, because $(q<0)$ holds for
  all $\zeta\in\left]\xi,\xi'\right[$. Therefore, there is a Thom code
  $\sgnseq{t}$ identifying $\xi$ as a root of $q$, and the right sign of
  $(q,\sgnseq{t})$ is negative. This ensures that
  $(q<0)[\msub{x}{(p,\sgnseq{s}})+\varepsilon]$ holds in this case, as well. The
  proof for the case when $\varrho$ is ``$>$'' is similar. Finally, the cases
  $\varrho$ is ``$\leqslant$'' or ``$\geqslant$,'' boil down to cases which we
  have just proven. Finally, we conclude that (ii) holds.
\end{proof}

\begin{theorem}[Correctness of the Elimination Set]\label{correctness1}
  Consider a formula $\varphi$ of degree at most $n$ in $x$, which is an
  $\land$-$\lor$-combination of atomic formulas
  $\{p_i\mathrel{\varrho_i}0\}_{i\in I}$, where $p_i\in\Zym[x]$, and
  $\varrho_i\in\{=,\neq,<,\leqslant,\geqslant,>\}$. Let $E_i$ be the set of
  elimination terms generated by $(p_i\mathrel{\varrho_{i}}0)$ as described
  above. Then the following is an elimination set for $\exists x\varphi$:
  \begin{displaymath}
    E=\bigcup_{i\in I}E_i \cup \{-\infty\}.
  \end{displaymath}
\end{theorem}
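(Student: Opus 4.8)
The plan is to fix an arbitrary assignment of values $\alpha_1$, \dots,~$\alpha_m\in\R$ to $y_1$, \dots,~$y_m$ and to prove the two directions of the displayed biconditional separately. Two structural facts will be used throughout, both resting on the fact that $\varphi$ is an $\land$-$\lor$-combination without negations: virtual substitution $[\msub{x}{e}]$ simply replaces each atom $(p_k\mathrel{\varrho_k}0)$ occurring in $\varphi$ by $(p_k\mathrel{\varrho_k}0)[\msub{x}{e}]$ and leaves the Boolean structure untouched; and the truth value of $\varphi$ is monotone in the truth values of its atoms, so that whenever a set $S$ of atoms, all satisfied by some $\zeta\in\R$, already forces $\varphi$ to be true, then $\zeta$ satisfies $\varphi$, and likewise for the virtually substituted atoms.

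For the right-to-left direction (soundness), assume that $\varphi[\msub{x}{e}]$ holds for some $e\in E$, and distinguish cases on $e$. If $e=(p_i,\sgnseq{s})$ or $e=(p_i,\sgnseq{s})+\varepsilon$, then Lemma~\ref{LE:semantics1}, applied to the finite family of all atoms of $\varphi$, yields a single $\vartheta\in\R$ satisfying every atom of $\varphi$ whose virtual substitution holds; these atoms force $\varphi$ and $\vartheta$ satisfies all of them, so $\vartheta$ satisfies $\varphi$ and hence $\R\models\exists x\varphi$. If $e=-\infty$, one first observes, straight from the definitions of $\mu_{<}$, $\mu_{>}$ and of $(q=0)[\msub{x}{-\infty}]$ and $(q\neq 0)[\msub{x}{-\infty}]$, that $(q\mathrel{\varrho}0)[\msub{x}{-\infty}]$ holds if and only if $q\evama\mathrel{\varrho}0$ holds for all sufficiently negative $x$; intersecting the finitely many rays so obtained gives one sufficiently negative $\vartheta$ satisfying all atoms whose substitution holds, and again $\vartheta$ satisfies $\varphi$.

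For the left-to-right direction (completeness), assume $\varphi$ is satisfied by some $\vartheta\in\R$, and let $Z$ be the finite set of real roots of those $p_i\evama$ whose degree is positive. On each connected component of $\R\setminus Z$ every atom, and hence $\varphi$, has constant truth value. Let $C$ be the connected component of $\{x\in\R\mid\varphi(x)\}$ that contains $\vartheta$, and distinguish three cases. If $C$ is unbounded to the left, then $\varphi$ is satisfied on a whole ray $\left]-\infty,\vartheta_0\right[$ that is disjoint from $Z$; every atom satisfied there is satisfied for all sufficiently negative $x$, so its virtual substitution of $-\infty$ holds, and since these atoms force $\varphi$ we obtain $\varphi[\msub{x}{-\infty}]$. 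If $C$ has a least element $\mu_0$, then $\varphi$ is not satisfied immediately to the left of $\mu_0$; since the atoms satisfied at $\mu_0$ force $\varphi$ whereas the atoms satisfied just to the left do not, some atom $(p_{i_0}\mathrel{\varrho_{i_0}}0)$ is satisfied at $\mu_0$ but not just to the left, and a short sign analysis shows that this is only possible when $\deg p_{i_0}\evama>0$, $p_{i_0}\evama(\mu_0)=0$, and $\varrho_{i_0}\in\{=,\leqslant,\geqslant\}$. Consequently the sequence $\sgnseq{t}=\sgn_{\mu_0}(p_{i_0}\evama',\deg p_{i_0})$ is consistent with $p_{i_0}$, so $(p_{i_0},\sgnseq{t})\in E_{i_0}\subseteq E$, the guard $\guard{p_{i_0}}{\sgnseq{t}}$ holds for $\alpha_1$, \dots,~$\alpha_m$, and the unique real number it determines by Lemma~\ref{LE:consistent2} is $\mu_0$; applying Lemma~\ref{LE:semantics2}(i) to each atom satisfied by $\mu_0$, together with monotonicity, yields $\varphi[\msub{x}{(p_{i_0},\sgnseq{t})}]$. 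If finally $C$ has an infimum $l\in\R$ that it does not attain, then $\varphi$ is not satisfied at $l$ but is satisfied throughout $\left]l,l+\delta\right[$ for some $\delta>0$ with $\left]l,l+\delta\right[\cap Z=\emptyset$; the symmetric analysis then shows that some atom $(p_{i_0}\mathrel{\varrho_{i_0}}0)$ satisfied on $\left]l,l+\delta\right[$ but not at $l$ must have $\deg p_{i_0}\evama>0$, $p_{i_0}\evama(l)=0$, and $\varrho_{i_0}\in\{\neq,<,>\}$, whence $(p_{i_0},\sgnseq{t})+\varepsilon\in E_{i_0}\subseteq E$ for the sign sequence $\sgnseq{t}=\sgn_{l}(p_{i_0}\evama',\deg p_{i_0})$ of $l$; applying Lemma~\ref{LE:semantics2}(ii) with $\xi=l$ and $\xi'\in\left]l,l+\delta\right[$ to each atom satisfied on $\left]l,l+\delta\right[$, together with monotonicity, yields $\varphi[\msub{x}{(p_{i_0},\sgnseq{t})+\varepsilon}]$.

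The completeness direction is the part I expect to be the main obstacle, and within it the delicate point is not the substitution bookkeeping, which is entirely delegated to Lemmas~\ref{LE:implguard}, \ref{LE:semantics1} and \ref{LE:semantics2}, but the sign analysis at a boundary point of the truth region of $\varphi$: one has to show that the relation $\varrho_{i_0}$ of the atom responsible for the change of truth value is always of the \emph{closed} kind $\{=,\leqslant,\geqslant\}$ when the boundary point is attained and of the \emph{open} kind $\{\neq,<,>\}$ when it is not, since exactly then is the desired test point --- respectively $(p_{i_0},\sgnseq{t})$ or $(p_{i_0},\sgnseq{t})+\varepsilon$ --- really a member of $E_{i_0}$. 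A secondary source of friction is the term $-\infty$, for which there is no off-the-shelf semantics lemma and the elementary facts about the signs of univariate polynomials near $-\infty$ have to be established directly; one must also watch the degenerate situations in which some $p_i\evama$ collapses to a nonzero constant or to the zero polynomial, in which case $E_i$ contributes no parametric root and the corresponding atom has no finite boundary point.
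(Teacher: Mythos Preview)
Your proposal is correct and follows essentially the same approach as the paper's proof: fix parameter values, establish soundness via Lemma~\ref{LE:semantics1} (plus a direct argument for $-\infty$), and establish completeness by locating a left boundary point of the satisfying region, identifying an atom responsible for the change of truth value there, and invoking Lemma~\ref{LE:semantics2}. The only cosmetic difference is that you work with the connected component of the satisfying set containing a chosen witness, while the paper works with the infimum of the entire satisfying set $S$ and distinguishes $\xi\in S$ versus $\xi\notin S$; you are also more explicit than the paper about the sign analysis pinning down $\varrho_{i_0}\in\{=,\leqslant,\geqslant\}$ versus $\varrho_{i_0}\in\{\neq,<,>\}$, which the paper simply asserts.
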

\begin{proof}
  We have to show that
  \begin{displaymath}
    \R\models\exists x\varphi
    \longleftrightarrow
    \bigvee_{e\in E}\varphi[\msub{x}{e}].
  \end{displaymath}
  Assign values $\alpha_1$, \dots,~$\alpha_m\in\R$ to variables $y_{1}$,
  \dots,~$y_m$. The sets $S_i\subseteq\R$ satisfying
  $(p_i\mathrel{\varrho_i}0)$---and therefore also the set $S\subseteq\R$
  satisfying $\varphi$---are now finite unions of pairwise disjoint (closed,
  half-closed, open, semi-infinite, or infinite) intervals. We show that
  $\bigvee_{e\in E}\varphi[\msub{x}{e}]$ holds if and only if $S\neq\emptyset$.
  Since $\alpha_1$, \dots,~$\alpha_m$ are arbitrary real values, this will imply
  the theorem.

  First assume that $\bigvee_{e\in E}\varphi[\msub{x}{e}]$ holds. Thus,
  $\varphi[\msub{x}{e}]$ holds for some $e\in E$. If
  $\varphi[\msub{x}{-\infty}]$ holds, then the set of values satisfying
  $\varphi$ is unbounded from below, i.e., non-empty.

  Now suppose that $\varphi[\msub{x}{(p,\sgnseq{s})}]$ holds for some parametric
  root $(p,\sgnseq{s})$. By Lemma~\ref{LE:semantics1}(i), there exists
  $\vartheta\in\R$ such that for any $i\in I$ we have:
  $(p_i\mathrel{\varrho_i}0)$ holds for $\vartheta$ whenever
  $(p_i\mathrel{\varrho_i}0)[\msub{x}{(p,\sgnseq{s})}]$ holds. The fact that
  $\varphi$ is an $\land$-$\lor$-combination of atomic formulas ensures that
  $\varphi$ holds for $\vartheta$.

  Assume that $\varphi[\msub{x}{(p,\sgnseq{s})+\varepsilon}]$ holds for some
  parametric root $(p,\sgnseq{s})$. By Lemma~\ref{LE:semantics1}(ii), there
  exists $\zeta\in\R$ such that for any $i\in I$ we have:
  $(p_i\mathrel{\varrho_i}0)$ holds for $\zeta$ whenever
  $(p_i\mathrel{\varrho_i}0)[\msub{x}{(p,\sgnseq{s})}]$ holds. Again, $\varphi$
  is an $\land$-$\lor$-combination of atomic formulas, so $\varphi$ holds for
  $\zeta$.

  We continue by assuming that $S$ is non-empty. To begin with, notice that if
  $S$ is unbounded from below, then $\varphi[\msub{x}{-\infty}]$ holds. In the
  following we therefore assume that $S$ is bounded from below. Since $S$ is a
  finite union of pairwise disjoint (closed, half-closed, open, semi-infinite,
  or infinite) intervals, there exists infimum $\xi$ of $S$ such that $\xi$ is a
  root of some polynomial occurring in $\varphi$. There are two cases to
  consider.

  In the first case $\xi\in S$. This implies that there is some $i\in I$ such
  that $(p_i\mathrel{\varrho_i}0)$ holds for $\xi$, and
  $\varrho_i\in\{=,\leqslant,\geqslant\}$. Moreover, $\xi$ is a root of $p_i$
  with Thom code $\sgnseq{r}$. By the definition of $E_i$ we conclude that
  $(p_i,\sgnseq{r})\in E$. Now Lemma~\ref{LE:semantics2}(i) together with the
  fact that $\varphi$ is an $\land$-$\lor$ combination of atomic formulas ensure
  that $\varphi[\msub{x}{(p_i,\sgnseq{r})}]$ holds.

  In the second case $\xi\notin S$, so there is some $i\in I$ such that
  $(p_i\mathrel{\varrho_i}0)$, $\varrho_i\in\{{\neq,}<,>\}$, holds for all
  $\zeta\in\left]\xi,\xi'\right[$, where
  \begin{displaymath}
    \xi' = \min\{\delta\mid j\in I\land\deg p_j>0 \land p_j(\delta) = 0 \land
    \delta > \xi\}.
  \end{displaymath}
  Furthermore, $\xi$ is a root of $p_i$ with Thom code $\sgnseq{r}$. By the
  definition of $E_i$ we have $(p_i,\sgnseq{r})+\varepsilon\in E_i$. Now
  Lemma~\ref{LE:semantics2}(ii) together with the fact that $\varphi$ is an
  $\land$-$\lor$ combination of atomic formulas ensure that
  $\varphi[\msub{x}{(p_i,\sgnseq{r})+\varepsilon}]$ holds. This concludes the
  proof of the theorem.
\end{proof}

We conclude this section with Algorithm~\ref{ALG:elimination1}, which eliminates
a single existential quantifier. We explicitly point to places where the
external quantifier elimination algorithm $\qealg$ is used. The correctness of
Algorithm~\ref{ALG:elimination1} follows from Theorem~\ref{correctness1}.
\begin{algorithm}[t]
  \caption{Elimination of $\exists x$}
  \label{ALG:elimination1}
  \DontPrintSemicolon
  \LinesNumbered
  \KwIn{$\varphi$\;
    $\varphi$ is an $\land$-$\lor$-combination of atomic formulas
    $(p_i\mathrel{\varrho_i}0)$, $i\in I$, where
    $\varrho_i\in\{=,\neq,<,\leqslant,\geqslant,>\}$, and $p_i\in\Zym[x]$, with
    $\deg p_i\leqslant n$.\;
  }
  \KwOut{$\psi$\;
    $\psi$ is a quantifier-free formula equivalent to $\exists x\varphi$.\;
  }
  \ForEach{$i\in I$}{
    $E_i := \emptyset$\;
    \ForEach{$\sgnseq{s}\in(\{-1,0,1\}^n\setminus\{(0,\dots,0)\})$}{
      Use $\qealg$ to compute $\guard{p_i}{\sgnseq{s}}$.\;
      \If{$\guard{p_i}{\sgnseq{s}}$ is satisfiable} {
        \If{$\varrho\in\{=,\leqslant,\geqslant\}$}{
          $E_i := E_i \cup \{(p_i,\sgnseq{s})\}$\;
        }
        \Else{
          $E_i := E_i \cup \{(p_i,\sgnseq{s})+\varepsilon\}$\;
        }
      }
    }
  }
  $\psi := \false$\;
  \ForEach{$e\in\left(\{-\infty\}\cup\bigcup_{i\in I}E_i\right)$}{
    Use $\qealg$ to compute $\varphi[\msub{x}{e}]$.\;
    $\psi := \psi\lor\varphi[\msub{x}{e}]$\;
  }
  \Return $\psi$\;
\end{algorithm}

\section{Smaller Elimination Sets}\label{SE:smaller}
In this section we are going to considerably reduce the size of our elimination
sets by generalizing a well-known idea from the linear quantifier
elimination~\cite{LoosWeispfenning:1993a} to arbitrary degrees: An elimination
term $t$ is added to the elimination set for $\varphi$ only if $t$ possibly
represents a \emph{lower} bound of a satisfying interval of $\varphi$ for some
choice of parameters $y_1$, \dots,~$y_m$.

Consider a single atomic formula $(p\leqslant 0)$, where $p\in\Zym[x]$. Let
$\alpha_1$, \dots,~$\alpha_m\in\R$, and assume that the polynomial $p\evama$ is
of positive degree and has a real root $\xi\in\R$. There are four possibilities
how $p\evama$ can look in the neighborhood of $\xi$, all of which are pictured
in Figure~\ref{FIG:situations}. The corresponding satisfying sets of
$(p\leqslant 0)$ are shown in red. Observe that only in the second and the third
case $\xi$ is a lower bound of the respective satisfying interval of
$(p\leqslant 0)$, i.e., only when $p\evama$ is positive on the left-hand side of
$\xi$. Consequently, if $p\evama$ is negative on the left-hand side of $\xi$,
then there is either another root $\zeta$ smaller than $\xi$, or $p\evama$ holds
at $-\infty$.

\begin{figure}[t]
  \centering
  \begin{tikzpicture}[line cap=round,line join=round,>=triangle 45,x=6.0cm,y=6.0cm]
    \draw[color=black] (-0.21,0) -- (0.29,0);
    \foreach \x in {-0.3,-0.25,-0.2,-0.15,-0.1,0.15,0.2,0.25}
    \draw[shift={(\x,0)},color=black] (0pt,-2pt);
    \clip(-0.31,-0.19) rectangle (0.29,0.22);
    \draw[line width=2pt, smooth,samples=100,domain=-0.3081407636834072:0.2911130362696604] plot(\x,{0-(\x)*((\x)-1)});
    \draw [line width=2pt,color=ffqqqq,domain=-0.3081407636834072:0.0] plot(\x,{(-0-0*\x)/-0.01});
    \begin{scriptsize}
      \fill [color=ffqqqq] (0,0) circle (2.5pt);
    \end{scriptsize}
  \end{tikzpicture}
  \qquad
  \begin{tikzpicture}[line cap=round,line join=round,x=6.0cm,y=6.0cm]
    \draw[color=black] (-0.31,0) -- (0.20,0);
    \foreach \x in {-0.3,-0.25,-0.2,-0.15,-0.1,0.15,0.2,0.25}
    \draw[shift={(\x,0)},color=black] (0pt,-2pt);
    \clip(-0.31,-0.19) rectangle (0.29,0.22);
    \draw[line width=2pt, smooth,samples=100,domain=-0.31:0.29111303626966045] plot(\x,{(\x)*((\x)-1)});
    \draw [line width=2pt,color=ffqqqq,domain=0.0:0.29111303626966045] plot(\x,{(-0-0*\x)/0.01});
    \begin{scriptsize}
      \fill [color=ffqqqq] (0,0) circle (2.5pt);
    \end{scriptsize}
  \end{tikzpicture}
  \\[6ex]
  \begin{tikzpicture}[line cap=round,line join=round,>=triangle 45,x=6.0cm,y=6.0cm]
    \draw[color=black] (-0.31,0) -- (0.29,0);
    \foreach \x in {-0.3,-0.25,-0.2,-0.15,-0.1,0.15,0.2,0.25}
    \draw[shift={(\x,0)},color=black] (0pt,-2pt);
    \clip(-0.31,-0.19) rectangle (0.29,0.22);
    \draw[line width=2pt, smooth,samples=100,domain=-0.31:0.29111303626966045] plot(\x,{5*(\x)^2});
    \begin{scriptsize}
      \fill [color=ffqqqq] (0,0) circle (2.5pt);
    \end{scriptsize}
  \end{tikzpicture}
  \qquad
  \begin{tikzpicture}[line cap=round,line join=round,>=triangle 45,x=6.0cm,y=6.0cm]
    \draw[color=black] (-0.21,0) -- (0.20,0);
    \foreach \x in {-0.3,-0.25,-0.2,-0.15,-0.1,0.15,0.2,0.25}
    \draw[shift={(\x,0)},color=black] (0pt,-2pt);
    \clip(-0.31,-0.19) rectangle (0.29,0.22);
    \draw[line width=2pt, smooth,samples=100,domain=-0.31:0.29111303626966045] plot(\x,{0-5*(\x)^2});
    \draw [line width=2pt,color=ffqqqq,domain=-0.31:0.0] plot(\x,{(-0-0*\x)/-0.01});
    \draw [line width=2pt,color=ffqqqq,domain=0.0:0.29111303626966045] plot(\x,{(-0-0*\x)/0.01});
    \begin{scriptsize}
      \fill [color=ffqqqq] (0,0) circle (2.5pt);
    \end{scriptsize}
  \end{tikzpicture}
  \caption{$p\evama$ near its root $\xi$\label{FIG:situations}}
\end{figure}

Similar ideas apply to all other relations, which motivates the following
revised definition of the sets of elimination terms $E_i'$ generated by
$(p_i\mathrel{\varrho_i}0)$, where again $(p_i,\sgnseq{s})$ is a parametric root
of $p_i$:
\begin{align*}
  (p_i = 0) &:\quad \left\{\,r \mid r = (p_i,\overline{s})\,\right\}\\
  (p_i \neq 0) &:\quad \left\{\, r+\varepsilon \mid r =
  (p_i,\overline{s})\,\right\}\\
  (p_i < 0) &:\quad \left\{\, r+\varepsilon \mid r =
  (p_i,\overline{s})\land\sgnr(r)=-1\,\right\}\\
  (p_i > 0) &:\quad \left\{\, r+\varepsilon \mid r =
  (p_i,\overline{s})\land\sgnr(r)=1\,\right\}\\
  (p_i \leqslant 0) &:\quad \left\{\, r \mid r =
  (p_i,\overline{s})\land\sgnl(r)=1\,\right\}\\
  (p_i \geqslant 0) &:\quad \left\{\, r \mid r =
  (p_i,\overline{s})\land\sgnl(r)=-1\,\right\}.
\end{align*}

\begin{theorem}[Correctness of the Smaller Elimination Set]\label{THM:correctness2}
  Theorem~\ref{correctness1} remains correct for $E_i'$ instead of $E_i$.
\end{theorem}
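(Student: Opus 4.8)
The plan is to run the proof of Theorem~\ref{correctness1} essentially unchanged, replacing only the step in which a suitable elimination term is read off from a satisfying point. The implication $\bigvee_{e\in E'}\varphi[\msub{x}{e}]\Rightarrow\R\models\exists x\varphi$ is immediate: since $E_i'\subseteq E_i$ for every $i\in I$, every disjunct over $E'=\bigcup_{i\in I}E_i'\cup\{-\infty\}$ is also a disjunct over $E=\bigcup_{i\in I}E_i\cup\{-\infty\}$, so the claim follows from Theorem~\ref{correctness1}. For the converse, fix $\alpha_1,\dots,\alpha_m\in\R$ and assume the satisfying set $S\subseteq\R$ of $\varphi$ is non-empty. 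As in the proof of Theorem~\ref{correctness1}, $S$ is a finite union of pairwise disjoint intervals whose finite endpoints are roots of the polynomials $p_i\evama$. If $S$ is unbounded from below, then $\varphi[\msub{x}{-\infty}]$ holds and we are done. Otherwise put $\xi=\inf S$.

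The new ingredient is a sharper choice of witnessing atom. Because $\varphi$ is an $\land$-$\lor$-combination of the atoms $(p_i\mathrel{\varrho_i}0)$, the Boolean function it induces on their truth values is monotone. Pick $\delta>0$ so small that no $p_i\evama$ has a root in $]\xi-\delta,\xi[$; then all atoms, and hence $\varphi$, are of constant truth value on that interval. Suppose first that $\xi\in S$. Then $\varphi$ holds at $\xi$ but is false on $]\xi-\delta,\xi[$, so by monotonicity some atom $(p_{i_0}\mathrel{\varrho_{i_0}}0)$ holds at $\xi$ while being false on $]\xi-\delta,\xi[$. Continuity forces $p_{i_0}\evama(\xi)=0$, and since this atom is non-constant $p_{i_0}\evama$ is non-constant, so $\deg p_{i_0}\evama>0$; moreover an atom that holds at a root of its polynomial must have $\varrho_{i_0}\in\{=,\leqslant,\geqslant\}$. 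Let $\sgnseq{r}$ be the Thom code of $\xi$ as a root of $p_{i_0}\evama$; then $(p_{i_0},\sgnseq{r})$ is a parametric root, $\guard{p_{i_0}}{\sgnseq{r}}$ holds for $\alpha_1,\dots,\alpha_m$, and by Lemma~\ref{LE:consistent2} $\xi$ is the unique real satisfying \ref{C1}--\ref{C3}.

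It remains to verify $(p_{i_0},\sgnseq{r})\in E_{i_0}'$ and that $\varphi[\msub{x}{(p_{i_0},\sgnseq{r})}]$ holds. For membership: if $\varrho_{i_0}$ is ``$=$'' there is no condition to check; if $\varrho_{i_0}$ is ``$\leqslant$'', then the atom being false on $]\xi-\delta,\xi[$ means $p_{i_0}\evama$ is positive there, i.e.\ $\sgnl(p_{i_0},\sgnseq{r})=1$, exactly the condition imposed on $E_{i_0}'$ for ``$\leqslant$''; the case ``$\geqslant$'' is symmetric with $\sgnl(p_{i_0},\sgnseq{r})=-1$. For satisfaction: applying Lemma~\ref{LE:semantics2}(i) to each atom of $\varphi$ that holds at $\xi$ shows that each of the corresponding substituted atoms holds, and monotonicity of $\varphi$ then yields $\varphi[\msub{x}{(p_{i_0},\sgnseq{r})}]$. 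The case $\xi\notin S$ is the mirror image: working on $]\xi,\xi'[$ with $\xi'$ the least root of some $p_i\evama$ above $\xi$ (as in the proof of Theorem~\ref{correctness1}), monotonicity provides an atom $(p_{i_0}\mathrel{\varrho_{i_0}}0)$ that holds on $]\xi,\xi'[$ but fails at $\xi$; then $p_{i_0}\evama(\xi)=0$ together with the atom failing at $\xi$ forces $\varrho_{i_0}\in\{\neq,<,>\}$, and the sign of $p_{i_0}\evama$ on $]\xi,\xi'[$ equals $\sgnr(p_{i_0},\sgnseq{r})$, which is exactly the condition imposed on $E_{i_0}'$ (vacuous for ``$\neq$'', equal to $-1$ for ``$<$'', to $1$ for ``$>$''); hence $(p_{i_0},\sgnseq{r})+\varepsilon\in E_{i_0}'$, and Lemma~\ref{LE:semantics2}(ii) together with monotonicity gives $\varphi[\msub{x}{(p_{i_0},\sgnseq{r})+\varepsilon}]$.

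I expect the main obstacle to be precisely this refined choice of $(p_{i_0}\mathrel{\varrho_{i_0}}0)$: one must select an atom whose truth value genuinely \emph{flips} as $x$ crosses $\xi$ from below, rather than merely one that is satisfied (or not) at $\xi$ as in the proof of Theorem~\ref{correctness1}. The monotonicity argument is what makes such an atom available, and the direction of the flip is exactly what pins down the left- or right-sign conditions distinguishing $E_i'$ from $E_i$. Everything else --- the well-definedness of $\sgnseq{r}$, the monotone passage from the substituted atoms to $\varphi$, and the two invocations of Lemma~\ref{LE:semantics2} --- is as in the proof of Theorem~\ref{correctness1}.
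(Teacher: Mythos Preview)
Your proof is correct and follows essentially the same strategy as the paper's: reduce the forward direction to Theorem~\ref{correctness1}, and for the converse locate at $\xi=\inf S$ an atom whose truth value flips as $x$ crosses $\xi$, then read off the required left- or right-sign condition from the direction of that flip. The one presentational difference is that you invoke monotonicity of the positive Boolean combination $\varphi$ directly to produce the flipping atom in one stroke, whereas the paper organizes the same argument through explicit index sets $I_1,\dots,I_6$ (by relation symbol) and obtains the flip via a small contradiction argument in the subcases $I_1=\emptyset$ and $I_4=\emptyset$. Your packaging is slightly more uniform; the paper's makes the role of each relation symbol more visible. Either way, the substance---Lemma~\ref{LE:semantics2} plus the sign-flip observation---is identical.
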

\begin{proof}
  We have to show that
  \begin{displaymath}
    \R\models\exists x\varphi
    \longleftrightarrow
    \bigvee_{e\in E}\varphi[\msub{x}{e}].
  \end{displaymath}
  We proceed similarly as in the proof of Theorem~\ref{correctness1}. Assign
  values $\alpha_1$, \dots,~$\alpha_m\in\R$ to variables $y_{1}$, \dots,~$y_m$,
  respectively. Again, the sets $S_i\subseteq\R$ satisfying
  $(p_i\mathrel{\varrho_i}0)$ and also the set $S\subseteq\R$ satisfying
  $\varphi$ are finite unions of pairwise disjoint (closed, half-closed, open,
  semi-infinite, or infinite) intervals. We show that $\bigvee_{e\in
    E}\varphi[\msub{x}{e}]$ holds if and only if $S\neq\emptyset$. This will
  imply the theorem.

  If $\bigvee_{e\in E}\varphi[\msub{x}{e}]$ holds, the proof of the fact that
  $S\neq\emptyset$ is exactly the same as in the proof of
  Theorem~\ref{correctness1}.

  Assume that $S$ is non-empty. To begin with, notice that if $S$ is unbounded
  from below, then $\varphi[\msub{x}{-\infty}]$ holds. In the following we
  therefore assume that $S$ is bounded from below. Since $S$ is a finite union
  of pairwise disjoint (closed, half-closed, open, semi-infinite, or infinite)
  intervals, there exists infimum $\xi$ of $S$ such that $\xi$ is a root of some
  polynomial occurring in $\varphi$. There are two cases to consider.

  In the first case $\xi\in S$. Define the following sets of indices:
  \begin{eqnarray*}
    I_1 & = & \{k\in I\mid\text{$\varrho_k$ is ``$=$''}\land \deg p_k>0\land p_k(\xi) = 0\},\\
    I_2 & = & \{k\in I\mid\text{$\varrho_k$ is ``$\leqslant$''}\land \deg p_k>0\land p_k(\xi) = 0\},\\
    I_3 & = & \{k\in I\mid\text{$\varrho_k$ is ``$\geqslant$''}\land \deg p_k>0\land p_k(\xi) = 0\}.
  \end{eqnarray*}
  Since $\xi\in S$, we have $(I_1\cup I_2\cup I_3)\neq\emptyset$.

  If $I_1\neq\emptyset$, then there exists $i\in I$ such that $\xi$ is a root of
  $p_i$ with Thom code $\sgnseq{s}$. By the definition of $E_i'$, it follows
  that $(p_i,\sgnseq{r})\in E$. Now Lemma~\ref{LE:semantics2}(i) together with
  the fact that $\varphi$ is an $\land$-$\lor$ combination of atomic formulas
  ensure that $\varphi[\msub{x}{(p_i,\sgnseq{s})}]$ holds.

  Now assume that $I_1=\emptyset$. We show that there is either some $i\in I_2$
  such that $p_i$ is positive at $\xi-\varepsilon$, or some $j\in I_3$ such that
  $p_j$ is negative at $\xi-\varepsilon$. Assume the opposite, i.e., for every
  $k\in I_2\cup I_3$, the atomic formula $(p_k\mathrel{\varrho_k}0)$ holds at
  $\xi-\varepsilon$. Let $\zeta\in\left]\xi',\xi\right[$, where
  \begin{displaymath}
    \xi' = \max\{\delta\mid l\in I\land\deg p_l>0 \land p_l(\delta) = 0 \land
    \delta < \xi\}.
  \end{displaymath}
  Since $I_1=\emptyset$, the following implication holds for all $m\in I$: If
  $(p_m\mathrel{\varrho}0)$ holds for $\xi$, then it holds for $\zeta$. This
  together with the fact that $\varphi$ is an $\land$-$\lor$-combination of
  atoms implies that $\varphi$ holds for $\zeta$; a contradiction. Therefore,
  there is either $i\in I_2$ such that $p_i$ is positive for $\xi-\varepsilon$,
  or $j\in I_3$ such that $p_j$ is negative for $\xi-\varepsilon$. In the first
  case $\xi$ is a root of $p_i$ with Thom code $\sgnseq{s}$, so the definition
  of $E_i'$ ensures that $(p_i,\sgnseq{s})\in E_i'$. Now,
  Lemma~\ref{LE:semantics2}(i) together with the fact that $\varphi$ is an
  $\land$-$\lor$ combination of atoms ensure that
  $\varphi[\msub{x}{(p_i,\sgnseq{s})}]$ holds. Finally, in the second case $\xi$
  is a root of $p_j$ with Thom code $\sgnseq{r}$, so the definition of $E_j$
  ensures that $(p_j,\sgnseq{r})\in E_j$. Again, Lemma~\ref{LE:semantics2}(i)
  together with the fact that $\varphi$ is an $\land$-$\lor$-combination of
  atoms ensure that $\varphi[\msub{x}{(p_j,\sgnseq{r})}]$ holds.

  We continue by assuming that $\xi\notin S$. Define the following sets of
  indices:
  \begin{eqnarray*}
    I_4 & = & \{k\in I\mid\text{$\varrho_k$ is ``$\neq$''}\land \deg p_k>0\land p_k(\xi) = 0\},\\
    I_5 & = & \{k\in I\mid\text{$\varrho_k$ is ``$<$''}\land \deg p_k>0\land p_k(\xi) = 0\},\\
    I_6 & = & \{k\in I\mid\text{$\varrho_k$ is ``$>$''}\land \deg p_k>0\land p_k(\xi) = 0\}.
  \end{eqnarray*}
  Since $\xi\notin S$ it follows that $(I_4\cup I_5\cup I_6)\neq\emptyset$.
  Define
  \begin{displaymath}
    \xi' = \min\{\delta\mid k\in I\land\deg p_k>0 \land p_k(\delta) = 0 \land
    \delta > \xi\}.
  \end{displaymath}

  If $I_4\neq\emptyset$, then there exists $i\in I$ such that
  $(p_i\mathrel{\varrho_i}0)$ holds for all $\zeta\in\left]\xi,\xi'\right[$.
  Moreover, $\xi$ is a root of $p_i$ with Thom code $\sgnseq{s}$. By the
  definition of $E_i'$, it follows that $(p_i,\sgnseq{s})+\varepsilon\in E$. Now
  Lemma~\ref{LE:semantics2}(ii) together with the fact that $\varphi$ is an
  $\land$-$\lor$ combination of atomic formulas ensure that
  $\varphi[\msub{x}{(p_i,\sgnseq{s})}]$ holds.

  Now assume that $I_4=\emptyset$. We show that there is either some $i\in I_5$
  such that $p_i$ is negative at $\xi+\varepsilon$ or some $j\in I_6$ such that
  $p_j$ is positive at $\xi+\varepsilon$. Assume the opposite, i.e., for every
  $k\in I_5\cup I_6$, the atomic formula $(p_k\mathrel{\varrho_k}0)$ does not
  hold at $\xi+\varepsilon$. Since $I_4=\emptyset$, the following implication
  holds for all $m\in I$: If $(p_m\mathrel{\varrho}0)$ holds in
  $\left]\xi,\xi'\right[$, then it holds for $\xi$. This together with the fact
  that $\varphi$ is an $\land$-$\lor$-combination of atomic formulas ensure that
  $\varphi$ holds for $\xi$; a contradiction. Therefore, there is either $i\in
  I_5$ such that $p_i$ is negative at $\xi+\varepsilon$ or some $j\in I_6$ such
  that $p_j$ is positive at $\xi+\varepsilon$. In the first case $\xi$ is a root
  of $p_i$ with Thom code $\sgnseq{s}$, so the definition of $E_i'$ ensures that
  $(p_i,\sgnseq{s})+\varepsilon\in E_i'$. Now Lemma~\ref{LE:semantics2}(ii)
  together with the fact that $\varphi$ is an $\land$-$\lor$-combination of
  atoms guarantee that $\varphi[\msub{x}{(p_i,\sgnseq{s})+\varepsilon}]$ holds.
  In the second case $\xi$ is a root of $p_j$ with Thom code $\sgnseq{r}$, so
  the definition of $E_j$ ensures that $(p_j,\sgnseq{r})+\varepsilon\in E_j$.
  Again, Lemma~\ref{LE:semantics2}(ii) together with the fact that $\varphi$ is
  an $\land$-$\lor$-combination of atoms ensure that
  $\varphi[\msub{x}{(p_j,\sgnseq{r})}+\varepsilon]$ holds. This finishes the
  proof of the theorem.
\end{proof}

Theorem~\ref{THM:correctness2} ensures the correctness of
Algorithm~\ref{ALG:elimination2}, which employs the idea of smaller elimination
sets.
\begin{algorithm}[t]
  \caption{Elimination of $\exists x$ (smaller elimination sets)}
  \label{ALG:elimination2}
  \DontPrintSemicolon
  \LinesNumbered
  \KwIn{$\varphi$\;
    $\varphi$ is an $\land$-$\lor$-combination of atomic formulas
    $(p_i\mathrel{\varrho_i}0)$, $i\in I$, where
    $\varrho_i\in\{=,\neq,<,\leqslant,\geqslant,>\}$, and $p_i\in\Zym[x]$, with
    $\deg p_i\leqslant n$.\;
  }
  \KwOut{$\psi$\;
    $\psi$ is a quantifier-free formula equivalent to $\exists x\varphi$.\;
  }
  \ForEach{$i\in I$}{
    $E_i' := \emptyset$\;
    \ForEach{$\sgnseq{s}\in(\{-1,0,1\}^n\setminus\{(0,\dots,0)\})$}{
      Use $\qealg$ to compute $\guard{p_i}{\sgnseq{s}}$.\;
      \If{$\guard{p_i}{\sgnseq{s}}$ is satisfiable} {
        \If{$\varrho_i$ is ``$=$''}{
          $E_i' := E_i' \cup \{(p_i,\sgnseq{s})\}$\;
        }
        \If{$\varrho_i$ is ``$\neq$''}{
          $E_i' := E_i' \cup \{(p_i,\sgnseq{s})+\varepsilon\}$\;
        }
        \If{$\varrho_i$ is ``$<$'' and $\sgnr((p_i,\sgnseq{s}))=-1$ }{
          $E_i' := E_i' \cup \{(p_i,\sgnseq{s})+\varepsilon\}$\;
        }
        \If{$\varrho_i$ is ``$>$'' and $\sgnr((p_i,\sgnseq{s}))=1$ }{
          $E_i' := E_i' \cup \{(p_i,\sgnseq{s})+\varepsilon\}$\;
        }
        \If{$\varrho_i$ is ``$\leqslant$'' and $\sgnl((p_i,\sgnseq{s}))=1$ }{
          $E_i' := E_i' \cup \{(p_i,\sgnseq{s})\}$\;
        }
        \If{$\varrho_i$ is ``$\geqslant$'' and $\sgnl((p_i,\sgnseq{s}))=-1$ }{
          $E_i' := E_i' \cup \{(p_i,\sgnseq{s})\}$\;
        }
      }
    }
  }
  $\psi := \false$\;
  \ForEach{$e\in\left(\{-\infty\}\cup\bigcup_{i\in I}E_i'\right)$}{
    Use $\qealg$ to compute $\varphi[\msub{x}{e}]$.\;
    $\psi := \psi\lor\varphi[\msub{x}{e}]$\;
  }
  \Return $\psi$\;
\end{algorithm}

\section{Relation to Other Work}\label{SE:relation}
The idea to use Thom's lemma as a basis for a virtual substitution-based
quantifier elimination method originally appeared as an outlook in
Weispfenning's article on the quadratic case~\cite{Weispfenning:1997a}.
Weispfenning, too, used an external quantifier elimination algorithm $\qealg$
for realizing the virtual substitution of elimination terms. Besides being way
more explicit, there are some principal differences and novelties in our
framework developed here.

The most important difference is the length of the sign sequences $\sgnseq{s}$
in parametric roots $(p,\sgnseq{s})$, where $p\in\Zym[x]$ is of degree $n$.
Weispfenning's sequences are of length $n-1$, while our framework uses sequences
of length $n$. This implies that $\sgnseq{s}$ imposes stronger restrictions on
the graph of the univariate polynomial $p\evama$ after fixing parameters around
the corresponding root, which we take advantage of to a considerable extent. For
instance, Lemma~\ref{LE:sgnclose} would not hold when using sequences of length
$n-1$. It follows that our notions of the left and the right sign make sense
only when using sequences of length $n$, which is crucial for the correctness of
our optimized elimination sets described in Section~\ref{SE:smaller}.

Weispfenning partitions his sign sequences into \emph{maximally consistent}
sets. This requires the computation of a case distinction on the number of real
roots of $p$. Our formulation of the Theorem~\ref{correctness1} and
Theorem~\ref{THM:correctness2} clearly exhibits that such a case distinction is
not necessary. The deeper reason for this is our following insight: Given
$n\in\N\setminus\{0\}$ there is $p\in\Zym[x]$, where $\deg p = n < m$ such that
all $3^n-1$ sign sequences (excluding the zero sequence) are consistent with
$p$, and therefore need be included in the elimination sets.

Independently, Weispfenning very explicitly discussed the cubic case
in~\cite{Weispfenning:1994a}. This treatment does not rely on an external
algorithm $\qealg$. However, it also does not use Thom codes at all, but relies
on a thorough analysis and case distinction on the \emph{real type} of the
relevant, at most cubic, polynomials. As a matter of fact, many ideas from that
work could be combined with our framework introduced here, either directly or
for realizing $\qealg$.

Our framework is compatible with various generalizations of real quantifier
elimination by virtual substitution. Positive quantifier elimination focuses on
the special case where all variables are known to be strictly
positive~\cite{SturmWeber:08a,SturmWeber:09a}. It is straightforward to adjust
our framework so that this assumption is taken into account during the
construction of elimination sets, decreasing their size. Furthermore, the
external algorithm $\qealg$ could take advantage of this assumption, possibly
constructing shorter quantifier-free formulas.

Extended quantifier elimination keeps track of the elimination terms used and
generates parametric sample solutions for $x$ guarded by quantifier-free
conditions~\cite{Weispfenning:94b,Weispfenning:97d}. The disjunction over these
conditions is actually the regular quantifier elimination result. For details we
refer the reader to our recent work~\cite{KostaSturm:15a-hr}. There we show for
fixed choices of parameters how to use postprocessing to eliminate nonstandard
symbols, like $-\infty$ and $\varepsilon$, from the sample solutions. Both
extended quantifier elimination and our postprocessing are compatible with our
proposed framework.

Generic quantifier elimination makes global assumptions on non-vanishing of
parametric expressions during the elimination process. The result is correct
only under these assumptions, which form part of the output. This saves case
distinctions, which considerably increases efficiency and reduces the overall
size of the output~\cite{DSW:98,Sturm:99a}. In our framework we could assume
that certain derivatives do not vanish, which would further decrease the size of
the elimination sets.

\section{Towards Practical Computations}\label{SE:examples}
Our theoretical discussion in Section~\ref{SE:framework} and
Section~\ref{SE:smaller} uses the external algorithm $\qealg$ \emph{online}
during the computation of $\guard{p}{\sgnseq{s}}$, as well as during virtual
substitution. However, it is a key idea of our framework to use in practice an
\emph{offline} approach. This is based on the following observation, which is
not hard to see:

\begin{proposition}[Finiteness Property]
  Let $n\in\N\setminus\{0\}$. Then our framework requires only finitely many
  formulas to be computed by $\qealg$ to realize quantifier elimination for
  \emph{all} formulas of degree at most $n$.\qed
\end{proposition}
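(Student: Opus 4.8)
The plan is to make precise the informal claim that, although the input formulas $\varphi$ to Algorithm~\ref{ALG:elimination1} and Algorithm~\ref{ALG:elimination2} range over an infinite family, the external algorithm $\qealg$ is only ever invoked on formulas belonging to a small number of fixed \emph{shapes}, into which the coefficients of the polynomials occurring in $\varphi$ enter merely as parameters that can be abstracted away. First I would enumerate, by inspecting the definitions of virtual substitution in Section~\ref{SE:framework}, exactly where $\qealg$ is used: (a) in computing a guard $\guard{p}{\sgnseq{s}}$; (b) in the virtual substitution $(q\mathrel{\varrho}0)[\msub{x}{(p,\sgnseq{s})}]$ of a plain elimination term; and (c) in the formula $\nu((p,\sgnseq{s}),(q,\sgnseq{t}))$ used inside the substitution of $(p,\sgnseq{s})+\varepsilon$. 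The substitution of $-\infty$ and the remaining cases of $[\msub{x}{\cdot+\varepsilon}]$ are assembled by explicit formulas ($\mu_{<}$, $\mu_{>}$, disjunctions of the above), so they need no further call. In each of (a)--(c) the input to $\qealg$ has the form $\exists x\,\chi$, where $\chi$ is a fixed Boolean combination of atomic formulas whose left-hand sides lie among $p,p',\dots,p^{(n)}$ and, in cases (b) and (c), among $q,q',\dots,q^{(n)}$; the derivatives have coefficients that are fixed integer multiples of those of $p$ and $q$.

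Next I would \emph{abstract the coefficients}. Writing $p=b_nx^n+\dots+b_0$ and $q=c_nx^n+\dots+c_0$ with $b_j,c_j\in\Zym$, I replace the $b_j$ and $c_j$ by $2(n+1)$ fresh indeterminates $u_0,\dots,u_n,v_0,\dots,v_n$, obtaining generic polynomials $\hat p,\hat q$ and a generic input $\exists x\,\hat\chi$. The resulting formula depends only on the fixed degree bound $n$, on the sign sequences $\sgnseq{s},\sgnseq{t}\in\{-1,0,1\}^n$, and --- in case (b) --- on the relation $\varrho\in\{=,\neq,<,\leqslant,\geqslant,>\}$, all of which range over finite sets. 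Hence there are only finitely many generic inputs; I would run $\qealg$ on each of them once and store the results $\hat\psi(u_0,\dots,u_n,v_0,\dots,v_n)$ in a table, precomputed offline and independent of any particular $\varphi$. The auxiliary information required to assemble the $+\varepsilon$ substitutions, namely the sets $N$, $P$ of sign sequences consistent with $\hat q$ having prescribed right sign, is likewise determined by $n$ and $\sgnseq{t}$ alone: by Lemma~\ref{LE:sgnclose} the right sign $\sgnr(\hat q,\sgnseq{t})$ is a well-defined element of $\{-1,1\}$, computable from the (already tabulated) satisfiable guard $\guard{\hat q}{\sgnseq{t}}$ by evaluating $\hat q$ at any one rational solution and isolating the corresponding root.

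Finally I would argue that a run of Algorithm~\ref{ALG:elimination1} or Algorithm~\ref{ALG:elimination2} on an arbitrary $\varphi$ of degree at most $n$ over parameters $y_1,\dots,y_m$ never needs a fresh $\qealg$ call: every call is on a formula obtained from one of the stored generic inputs by the parameter substitution $u_j\mapsto b_j$, $v_j\mapsto c_j$, where $b_j,c_j\in\Zym$ are the coefficients of the relevant $p_i$'s. Since the precomputed equivalence $\hat\psi\leftrightarrow\exists x\,\hat\chi$ holds in $\R$ for all real values of $u_0,\dots,u_n,v_0,\dots,v_n$, it holds in particular at the values $b_j(\alpha),c_j(\alpha)$ for every $\alpha\in\R^m$; therefore the result of substituting the coefficient polynomials into $\hat\psi$ is a quantifier-free equivalent of the actual input to $\qealg$, and may be used in its place. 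The main point to get right is precisely this last step --- that quantifier elimination commutes with substitution of the parameters $u_j,v_j$ by polynomials in $y_1,\dots,y_m$ --- together with the minor bookkeeping of padding each $p_i$ to formal degree $n$ (harmless, since a vanishing leading coefficient is correctly accounted for by the guard and by the sign-sequence conditions of Section~\ref{SE:roots}) so that it matches one of the finitely many stored generic shapes. Everything else is a finite enumeration.
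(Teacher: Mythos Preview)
Your proposal is correct and follows essentially the same approach as the paper: the paper states the proposition without proof (marked \qed) and immediately afterward sketches exactly your argument, namely that the finitely many formulas are $\guard{p}{\sgnseq{s}}$, $(q\mathrel{\varrho}0)[\msub{x}{(p,\sgnseq{s})}]$, and $\nu((p,\sgnseq{s}),(q,\sgnseq{t}))$ for generic $p=u_nx^n+\dots+u_0$ and $q=v_nx^n+\dots+v_0$, with calls to $\qealg$ replaced by substitutions into the $u_i$ and $v_j$. Your write-up is considerably more detailed---in particular the justification that quantifier elimination commutes with parameter substitution and the handling of padding to degree $n$---but the underlying idea is identical.
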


These finitely many formulas are essentially $\guard{p}{\sgnseq{s}}$,
$(q\mathrel{\varrho}0)[\msub{x}{(p,\sgnseq{s})}]$, and
$\nu((p,\sgnseq{s}),(r,\sgnseq{t}))$ for generic $p=u_nx^n+\dots+u_0$,
$q=v_nx^n+\dots+v_0$, and for all possible choices of $\varrho$, $\sgnseq{s}$,
and $\sgnseq{t}$. Then, calls to $\qealg$ in our theoretical description can be
replaced by substitutions into the $u_i$ and $v_j$.

Practical experiments have shown that formulas computed this way using, e.g.,
Qepcad~\cite{Brown:03a} allow to generate such formulas for $n=2$. However,
using these results our framework cannot compete with the classical approach
from~\cite{Weispfenning:1997a}. Having computed part of the formulas for $n=3$,
we assume that Qepcad~\cite{Brown:03a} is going to fail for efficiency reasons
at $n=4$ latest. Our vision is that researchers would create suitable sets of
formulas for given degrees $n$ combining automated methods with human
intelligence. This can even lead to formulas of optimal size, as Lazard has
demonstrated with his famous result on the quartic problem~\cite{Lazard:1988}.
This way, independent research results in symbolic computation could push the
limits of practically applicable quantifier elimination by virtual substitution
towards increasingly higher degrees $n$.

\section*{Acknowledgments}
This research was supported in part by the German Transregional Collaborative
Research Center SFB/TR 14 AVACS and by the ANR/DFG Programme Blanc Project STU
483/2-1 SMArT.

\end{document}